\documentclass[conference]{IEEEtran}
\IEEEoverridecommandlockouts
\usepackage{tabularx}

\ifCLASSINFOpdf
\else
\fi
\usepackage[numbers]{natbib}
\usepackage{epsfig}
\usepackage{epstopdf}
\usepackage[T1]{fontenc}
\usepackage{amssymb}
\usepackage{amsmath}
\usepackage{amsfonts}
\usepackage{verbatim}
\usepackage{graphicx}
\usepackage{hyperref}
\usepackage[usenames,dvipsnames]{xcolor}
\usepackage{lipsum}
\usepackage{booktabs}
\usepackage[normalem]{ulem}

\usepackage{multicol}
\usepackage{pgfplots}
\usepackage{lipsum}
\usepackage{colortbl}
\usepackage{caption}
\usepackage{algorithm}
\usepackage[noend]{algpseudocode}
\usepackage{enumitem}
\makeatletter
\def\algbackskip{\hskip-\ALG@thistlm}
\makeatother
\usepackage{amsthm}

\newtheorem{theorem}{Theorem}

\usepackage{multirow}

\usepackage{colortbl}
\usepackage{tabularx}
\usepackage{lipsum}  
\usepackage{pifont}
\usepackage{makecell}
\usepackage{float}
\usepackage{amsthm}
\usepackage{pgf}
\usepackage[table]{xcolor}
\usepackage{threeparttable}
\newcommand\subsubsubsection{\@startsection{paragraph}{4}{\z@}%
  {1.5ex \@plus 1ex \@minus .2ex}%
  {-1em}%
  {\normalfont\normalsize\bfseries}}
\makeatother

\usepackage{tikz,bm,color}

\usetikzlibrary{circuits.logic.US,circuits.logic.IEC,fit}

\begin{document}
%
\title{A Lightweight Fault-Detection Scheme for Barrett Modular Multiplication Using Multiple Conditional Reduction Paths}

 \author{
   \IEEEauthorblockN{Rourab Paul\IEEEauthorrefmark{1}, \IEEEmembership{Member,~IEEE,} Paresh Baidya\IEEEauthorrefmark{2}, Krishnendu Guha\IEEEauthorrefmark{3}, Amlan Chakrabarti\IEEEauthorrefmark{4}}
      \IEEEauthorblockA{\IEEEauthorrefmark{1}Department of Computer Science and Engineering, Shiv Nadar University, Chennai, Tamil Nadu, India}
      \IEEEauthorblockA{\IEEEauthorrefmark{2}Department of Computer Science and Engineering, Siksha ‘O’ Anusandhan Deemed to be University, Bhubaneswar, India}
    \IEEEauthorblockA{\IEEEauthorrefmark{3}School of Computer Science and Information Technology, University College Cork, Ireland\\
    \IEEEauthorblockA{\IEEEauthorrefmark{4}School of IT, University of Calcutta, West Bengal, India}     
            \IEEEauthorblockA{\IEEEauthorrefmark{1}rourabpaul@snuchennai.edu.in}
    }
 }
\maketitle
\vspace{-10pt}

\begin{abstract}
Polynomial multiplication is the most resource-, time-, and energy-critical operation in lattice-based Post-Quantum Cryptography (PQC) and Fully Homomorphic Encryption (FHE) schemes. Lattice-based PQC schemes such as Kyber and Dilithium have already been standardized, while lattice-based FHE schemes such as BGV, BFV, and CKKS are widely recognized as leading candidate in FHE area. Barrett Modular Multiplication (BMM) for polynomial multiplication is widely adopted in PQC and FHE hardware accelerators due to its hardware friendly nature and efficient modular reduction capabilities. However, Side-Channel Attacks (SCAs) and Hardware Trojans may introduce intentional faults, while aging and various other factors can cause unintentional faults. These faults may target the $BMM$ unit, one of the most critical components of PQC and FHE infrastructures, potentially leading to information leakage and compromising system security. In this paper, we employ a Statistical Reduction Monitoring (SRM) method to protect the $BMM$ unit against such adversarial conditions. The proposed approach incurs minimal hardware overhead while providing efficient detection of both random and burst faults under both permanent and transient fault conditions.
\end{abstract}

\begin{IEEEkeywords}
Barrett Modular Multiplication, Polynomial Multiplication, Fault, NTT, FPGA.
\end{IEEEkeywords}

\section{Introduction}
\label{sec:intro}
Decimation-in-Time (DIT) and Decimation-in-Frequency (DIF) Number Theoretic Transform (NTT) is the backbone of lattice-based post-quantum cryptography (PQC) and fully homomorphic encryption (FHE). It significantly accelerates polynomial multiplication by reducing the computational complexity of multiplying two degree-$(n-1)$ polynomials from $\mathcal{O}(n^2)$ to $\mathcal{O}(n\log n)$. Therefore, polynomial multiplication in NIST-standardized lattice-based cryptographic schemes such as Kyber \cite{FIPS203} and Dilithium \cite{FIPS204}, as well as in FHE schemes including Brakerski Fan Vercauteren (BFV) \cite{BFV} and Brakerski Gentry Vaikuntanathan (BGV) \cite{BGV}, which are part of the NIST FHE standardization framework, can be significantly accelerated through dedicated hardware accelerators based on the $NTT$. DIT uses the Cooley Tukey (CT) butterfly structure, whereas DIF employs the Gentleman Sande (GS) butterfly structure. Both the CT and GS butterfly units require modular multiplication as a fundamental operation. Barrett \cite{barrett} modular multiplication (BMM) is one of the most efficient and widely adopted algorithms for performing modular multiplication in lattice-based cryptographic schemes. However, due to the globalization of the silicon supply chain, hardware Trojans have emerged as a significant threat to PQC and FHE hardware accelerators. Since the $BMM$ is one of the most resource-, power-, and time-critical components in PQC and FHE implementations, it represents an attractive target for adversaries. By compromising the $BMM$, attackers can disrupt the functionality of the entire PQC and FHE infrastructure deployed in secure processors. As a result, it potentially leading to erroneous computations, denial-of-service attacks, or leakage of sensitive information.
\subsection{Threat Model}
\label{sec:threat}
Lattice-based PQC schemes often require a sampler to generate errors and secrets. Thomas Espitau et al. \cite{espitau} perform a loop-abort attack on the samplers used in CRYSTALS-Kyber, FrodoKEM, BLISS, and the GPV signature scheme.
 The error or secret polynomial in the lattice-based PQC schemes mentioned above can be written as
$e = [e_0, e_1, e_2, \ldots, e_{n-1}]$.
Espitau et al. \cite{espitau} show that by aborting the sampling process before completion, an attacker can obtain a lower-degree polynomial of the form
$e = [e_0, e_1, \ldots, e_{p-1}, 0, 0, \ldots],$
where the remaining coefficients are zero due to the premature termination of the sampler. As a result, this breaks the intended masking and leaks linear equations that reveal information about the secret. The $NTT$ operations of Kyber and Dilithium in the PQC library for ARM Cortex-M4 (PQM4) \cite{pqm4} were successfully attacked by \cite{ravi_ntt} Ravi et al. They targets the modular multiplication operation of $NTT$ and makes the twidle factor (root of unity) zero which produces a lower degree polynomial at the output of $NTT$. It is to be noted that, in lattice-based PQC and FHE schemes, polynomial multiplication is typically performed between a twiddle factor and another operand, which is a coefficient of the transformed polynomial and may originate from secret vectors, error polynomials, ciphertext polynomials, or other intermediate polynomials. In Kyber, the key generation, encryption, and decryption processes pass the secret polynomial $s$, error polynomial $e$, ephemeral secret $r$, noise polynomials $e_1$ (vector) and $e_2$ (scalar), and the first ciphertext component $u$ to the $NTT$ operation \cite{FIPS203}. In CKKS, the key generation, encryption, and decryption processes involve the secret polynomial $s$, public polynomial $a$, error polynomial $e$, ephemeral secret $v$, and noise polynomials $e_0$ and $e_1$, with ciphertext components $(c_0, c_1)$ \cite{ckks}. Similarly, other lattice-based PQC schemes such as Dilithium, NTRU and Falcon, as well as FHE schemes including BGV and BFV, require multiple polynomials to be transformed into the $NTT$ domain for efficient polynomial multiplication. Therefore, zeroizing attacks \cite{ravi_ntt} and loop-abort attacks \cite{espitau} can produce lower-degree polynomials at the $NTT$ output, thereby reducing the entropy of the secret and error polynomials in PQC and FHE schemes.
\subsection{Literature}
The fault-detection literature for PQC and FHE can be broadly divided into two categories in this context.
\subsubsection{Fault Detection for PQC and FHE} Several fault detection architectures have been proposed as error detection mechanisms for various cryptographic primitives, which are the fundamental building blocks of PQC and FHE protocols.
Fault detection techniques, including cyclic redundancy check (CRC) \cite{canto}, RENO \cite{sarker}, RESO \cite{canto2}, REMO \cite{saeed} and RECO \cite{canto2} are commonly adopted due to their low implementation overhead and ease of integration.
In paper \cite{canto2}, Canto et al. propose recomputation-based fault detection  architectures for lattice-based key encapsulation mechanisms (KEMs) FrodoKEM, Saber, and NTRU on a Kintex Ultrascale+ FPGA. These schemes use multiplication units such as matrix-by-matrix, matrix-by-vector, vector-by-vector, and polynomial multiplications. Their work introduce Re-computing with Shifted Operands (RESO), Re-computing with Negated Operands (RENO) and Re-computing with Scaled Operands (RECO) for the Multiply-Accumulate (MAC) operation to detect both transient and permanent faults with high error coverage and minimal performance overhead. 
In artcile\cite{canto}, Canto et al. presented a CRC-based error detection technique for finite-field multipliers in the Luov post-quantum signature scheme. The proposed CRC-5-based architecture is implemented on a Kintex UltraScale+ FPGA and achieves high fault coverage with acceptable hardware overhead. In the article \cite{sarker}, Sarker et al. have implemented fault detection architectures for hardware/software co-design implementations of the $NTT$ accelerator on Zynq UltraScale+ and Spartan-7 devices. They propose error detection through recomputing with negated operands (RENO) and recomputing with negated and swapped operands to detect both transient and permanent faults in the $NTT$ accelerator. The result demonstrates that placing RENO deeper in the logic path increases slice utilization but increases error detection efficiency exceeding 99\% against stuck-at fault models.
\par
Saeed et al. \cite{saeed} proposed a lightweight fault detection architecture for modular exponentiation $C=X^Y \bmod q$, a crucial operation of numerous
cryptographic applications such as diffie-Hellman key exchange, RSA cryptosystems, ElGamal encryption, and some PQC schemes. The proposed Recomputation with Modular Offset (REMO) computes $C'=(X+Offset)^Y~\bmod~q$. Then it compares the result of  the modular exponentiation unit, $C$, and the REMO unit, $C'$, to detect faults. This method achieves nearly 100\% error detection with minimal computational and area overhead.

\subsubsection{BMM Specific fault Detection}
To protect the $BMM$, serves as a critical modular multiplication unit within the $NTT$ of lattice-based PQC and FHE schemes, many countermeasures and security mechanisms have been proposed in the literature. Aghapour et al. \cite{aghapour_barrett} implements safeguards for $BMM$ and Barrett Modular Reduction ($BMR$). For $BMR$ they used a probability approximation. As shown in Eq.~\ref{eq:bmr1}, two intermediate remainders, $r_1$ and $r_2$, are computed to determine the final remainder $r$. Here, $N$ is a $p$-bit integer, while $u$ is a $2p$-bit integer. $\hat{q}$ is precomputed constant.
\begin{equation}
    r_1= u \bmod b^{p+1}, r_2= (\hat{q}N) \bmod b^{p+1},
    r=r_1-r_2
\label{eq:bmr1}
\end{equation}
The final remainder is calculated by a while loop as shown in Eq. \ref{eq:bmr2}. The theoretical property of $BMM$ says that there is approximately a $99\%$ probability that only a single subtraction will be required within the final correction loop, whereas there is approximately a $1\%$ probability that two subtraction operations will be required before obtaining the final remainder.
\begin{equation}
\textbf{while } (r \geq N) \textbf{ do } r \leftarrow r - N
\label{eq:bmr2}
\end{equation}
For $BMM$, Aghapour et al. \cite{aghapour_barrett} uses Recomputation with Modular Offset (REMO) scheme. To detect fault of $ab$ $mod$ $q$, they calculate $(a+k_1q)(b+k_2q)$ $mod$ $q$ where $k_1$ and $k_2$ are random integers. In a fault-free situation, both computations produce same outputs. However, any mismatch between the two outputs is interpreted as an a fault, and an error flag is asserted.
\par Baidya et al.~\cite{baidya} proposed three recomputation-based fault detection techniques, namely Recomputation with Negated Operand (RENO), Recomputation with Shifted Operand (RESO), and Recomputation with Swapped Operand (RESWO). Baidya et al.~\cite{baidya} proposed protection schemes for the $BMM$ and reported the fault-detection overhead at the $NTT$ level. However, the overhead of RESO, RECO, and RESWO recomputation-based fault detection schemes was not evaluated separately for the $BMM$. If quantified at the $BMM$ level, the additional computations introduced by recomputation would likely result in a significant overhead.
\subsection{Contribution}
The existing fault detection schemes for BMMs are predominantly recomputation-based, resulting in significant area and energy overheads. In this work, we propose a word-wise BMM architecture consisting of two reduction stages. By exploiting the probabilistic behavior of the conditional execution paths within these reduction stages, we design a Statistical Reduction Monitor (SRM) for fault detection. The main contributions of this work are summarized as follows: 
\begin{itemize}
    \item \textbf{Statistical Reduction-Aware Fault Detection:} The proposed fault detection scheme exploits the statistical behavior of the two reduction stages employed in the word-wise BMM. Unlike existing recomputation-based approaches, the proposed Statistical Reduction Monitor (SRM) leverages the intrinsic probability distribution of reduction-path execution to detect faults. 
    \item \textbf{Ultra-Lightweight Protection for BMM Architectures:} The proposed scheme achieves efficient fault detection with minimal resource and energy overhead, making it one of the lightest-weight protection mechanisms specifically designed for BMM architectures. The protected BMM is also integrated into the $NTT$, and to the best of our knowledge, it achieves the lowest area, delay, and energy overhead among all existing protected $NTT$ schemes.
    \item \textbf{Comprehensive Validation Against Diverse Fault Models:} The extensive simulation and hardware-emulation results demonstrate that the proposed word-wise $BMM$ achieves high fault-detection efficiency against both random and burst fault injections during the Kyber and CKKS key-generation processes. The proposed fault-detection scheme effectively detects both transient and permanent faults while incurring minimal hardware overhead.

\end{itemize}
This paper is organized as follows. Sec. \ref{sec:intro} introduces the problem statement. Sec. \ref{sec:BMM} presents the algorithm of the proposed word-wise BMM. Sec. \ref{sec:srm} describes the proposed Statistical Reduction Monitor (SRM). Sec. \ref{sec:bmm:hw} details the hardware architecture of the proposed word-wise BMM. Sec. \ref{sec:kyber:ckks} and Sec. \ref{sec:injection} discuss the secret-vector generation processes of Kyber and CKKS, and the corresponding fault-injection methodologies, respectively. Sec. \ref{sec:result} presents the error-detection efficiency and hardware overhead of the proposed fault-detection scheme. Finally, Sec. \ref{sec:con} concludes the paper.
 \begin{algorithm}[!htb]
    \caption{Wordwise BMM}
    \label{algo:bmm}
   \textbf{Input} $a =(a_{l-1},...a_{1}, a_{0})$,
      $b =(b_{l-1},...b{_1}, b_{0})$ \\
    $q =(q_{l-1},...q_{1}, q_{0})$ where $\mu=\lfloor \frac{2^{2\times l}}{q} \rfloor$ \& $k=2\times l$ \\
  \textbf{Output} $c =(c_{l-1},...c_{1}, c_{0}), \rho_1, \rho_2$
    \begin{algorithmic}[1]
    \State $R=0$, $\rho_1=0$, $\rho_2=0$	
      \For{i=0 to $(\frac{l}{w}-1)~~$}
       \For{j=0 to $(\frac{l}{w}-1)~~$}
      \State $a w_i=a_{[iw+w-1...iw]}$
      \State $b w_j=b_{[jw+w-1...iw]}$
      \State $c=a w_i \times b w_j$ \label{line:x1}
      \State $c=c$ || $(i+j) \times w \{0\}$ \label{line:<<1}
      \State $\kappa=(c \times \mu)_{[2k-1... k]}$ \label{line:s1}
      \State $r=c - \kappa\times q $ \label{line:x2}
      \If {($r\geq q$)} \label{line:r1} \Comment{Reduction-1}
        \State $R=R+r-q$,  $\rho_1=1$ \label{line:R}
     \Else
       \State $R=R+r$, $\rho_1=0$
     \EndIf
     \If {$R\geq q$} \label{line:r2} \Comment{Reduction-2}
       \State $R=R-q$, $\rho_2=1$ 
    \Else
       \State $\rho_2=0$
     \EndIf \label{line:e_r} 
      \EndFor    
      \EndFor   
    \State  \textbf{return}   $R$, $\rho_1$, $\rho_2$
    \end{algorithmic}  
    \end{algorithm}

\section{Barrett Modular Multiplication BMM (Wordwise)}
\label{sec:BMM}
The word-wise BMM takes two $l$-bit operands, $a$ and $b$, as inputs, along with an $l$ bit modulus $q$. In each iteration, the algorithm extracts $w$-bit words from the inputs $a$ and $b$. The indices $i$ and $j$ are used to select the corresponding windows from $a$ and $b$, respectively.
As shown in Algorithm~\ref{algo:bmm}, the variables $c$, $\kappa$, and $r$ are used to store intermediate values during each iteration. The variable $c$ stores the shifted partial multiplication result obtained from the product of the selected words from $a$ and $b$. The variable $\kappa$ stores the truncated lower bits of the product between $c$ and the precomputed constant $\mu$, where $\mu = \left\lfloor \frac{2^{2l}}{q} \right\rfloor.$ The variable $r$ stores the intermediate modular reduction result computed as $r = c - \kappa q$. This step provides an approximate reduction of the partial product with respect to the modulus $q$. In line~\ref{line:r1} of Algorithm \ref{algo:bmm}, the first conditional reduction checks whether $r \geq q$. If this condition is satisfied, the modulus $q$ is subtracted from $r$, and the flag $\rho_1$ is set to indicate that a correction was performed. Otherwise, the value of $r$ is directly accumulated into $R$. This first conditional reduction is named as $Reduction-1$ 
Similarly, in line~\ref{line:r2} of Algorithm \ref{algo:bmm}, a second reduction named as $Reduction-2$ is applied to ensure that the accumulated $R$ remains within the valid range $[0, q-1]$. If $R \geq q$, the modulus $q$ is subtracted once more and the flag $\rho_2$ is set to $1$. These $\rho_1$ and $\rho_2$ are the inputs of our statistical reduction monitoring.

\begin{table}[h]
\centering
\caption{Parameters and Description}
\begin{tabular}{|l|p{6.5cm}|}
\hline
\textbf{Parameters} & \textbf{Description} \\
\hline
$c$, $\kappa$ and $r$ & Three internal register of wordwise BMM \\
\hline
$n-1$ & Degree of polynomial \\
\hline
$\lambda$ & Number of faulty loops in the $NTT$, out of $\frac{n}{2} \times \log_{2}(n)$ total loops \\ \hline
$\chi_0$ & Number of faulty loops in the $NTT$ (same total: $\frac{n}{2} \times \log_{2}(n)$) \\ \hline
$\phi$ & Number of faulty bits \\ \hline
$n\chi_0$, $\chi_0$ & Number of operations without any reduction; $n\chi_0$ expressed as a percentage (\%) \\ \hline
$n\chi_1$, $\chi_1$ & Number of operations with Reduction-1 (Line \ref{line:r1} of Algorithm \ref{algo:bmm}); $n\chi_1$ expressed as a percentage (\%) \\ \hline
$n\chi_2$, $\chi_2$ & Number of operations with Reduction-2 (Line \ref{line:r2} of Algorithm \ref{algo:bmm}); $n\chi_2$ expressed as a percentage (\%) \\ \hline
$n\chi_3$, $\chi_3$ & Number of operations with Reduction-1 and Reduction-2 (Line \ref{line:r1} \& Line \ref{line:r2} of Algorithm \ref{algo:bmm}); $n\chi_3$ expressed as a percentage (\%) \\ \hline
\end{tabular}
\label{tab:parameters}
\end{table}

\begin{table}[!t]
\caption{Distribution of Exact and Approximate Quotients for different Left-Shift where l=12, w=4}
\label{tab:quotient_distribution}
\centering
\begin{tabular}{|p{2cm}|p{2cm}|c|c|}
\hline
\textbf{Left Shift (line \ref{line:<<1} of Algo. \ref{algo:bmm})} & \textbf{Possible Values of $c$ at line \ref{line:x1} of Algo. \ref{algo:bmm} } & \textbf{$\boldsymbol{\kappa=\left\lfloor\frac{c}{q}\right\rfloor}$} & \textbf{$\boldsymbol{\kappa=\left\lfloor\frac{c}{q}\right\rfloor-1}$} \\
\hline
$\ll 0$  & 226 & 226 & 0  \\
\hline
$\ll 4$  & 226 & 226 & 0  \\
\hline
$\ll 8$  & 226 & 226 & 0  \\
\hline
$\ll 12$ & 226 & 221 & 5  \\
\hline
$\ll 16$ & 226 & 154 & 72 \\
\hline
\end{tabular}
\end{table}

\section{Fault Detection with Statistical Reduction Monitoring (SRM)}
\label{sec:srm}
Our Statistical Reduction Monitoring (SRM) tracks the frequency of reduction operations during the execution of the algorithm. For the two reduction steps: $Reduction-1$ and $Reduction-2$ described in Line~\ref{line:r1} and Line~\ref{line:r2} of Algorithm~\ref{algo:bmm}, respectively, four possible cases may occur: (i) no reduction, (ii) only the reduction-1, (iii) only the reduction-2, and (iv) both reductions. 
Sec~\ref{sec:expl} explains that when the input operand $a$ is a uniformly distributed random number and $b$ is an $NTT$ twiddle factor (which is generally the case during key generation in lattice-based PQC and FHE schemes) as shown in Algorithm \ref{algo:bmm}, the proposed word-wise $BMM$ most frequently bypasses both reduction stages, referred to as the 'no reduction' case (i). In contrast, the simultaneous execution of both reduction stages, referred to as the 'both reduction' case (iv), occurs with the lowest probability. The 'only reduction-2' case (iii) is the second most frequent execution path, whereas the 'only reduction-1' case (ii) is the third most frequent.
The same statistical execution pattern can be explained when the input operand $a$ in Algorithm~\ref{algo:bmm} corresponds to the intermediate $NTT$ coefficients derived from the secret key during the key generation process of both Kyber and CKKS. The details observation is discussed in Sec. \ref{sec:keygen:kyber} and Sec. \ref{sec:keygen:ckks}.
As shown in Algorithm~\ref{algo:fault:bmm}, the $SRM$ takes $\rho_1$ and $\rho_2$ as inputs form Algorithm \ref{algo:bmm} and computes the percentages $\chi_0$, $\chi_1$, $\chi_2$, and $\chi_3$, corresponding to the four cases mentioned above.
\subsection{Explanation of Probabilities}
\label{sec:expl}
If total $n$ events are there, probability for 'No Reduction' is $P_{\chi0}$=$\frac{n\chi_{0}}{n}$, Probability for ' only the reduction-1' is $P_{\chi1}$=$\frac{n\chi_{1}}{n}$, Probability for ' only the reduction-2' is $P_{\chi2}$=$\frac{n\chi_{2}}{n}$ and finally Probability for 'both reduction' is $P_{\chi3}$=$\frac{n\chi_{3}}{n}$. Here $n$=$n\chi_{0}$+$n\chi_{1}$+$n\chi_{2}$+$n\chi_{3}$. From our experimental observation, the probability relationship without any fault occurrence in $c$, $\kappa$ and $r$ can be expressed as:
\begin{equation}
    P_{\chi0} > P_{\chi2} > P_{\chi1} > P_{\chi3}
\end{equation}

\subsubsection{Probability of Reduction-1 Only}
\label{sec:red1}
The reduction-1 stage calculated by line \ref{line:x2} of Algorithm \ref{algo:bmm} is an intermediate remainder $r$.
\begin{align}
r = c-\kappa q,
\label{eq:r}
\end{align}
where \(c\) is the intermediate shifted product and \(\kappa\) is the approximate quotient derived from the Barrett approximation that satisfies Theorem~\ref{thm:exact_quotient} which shows that only two cases are possible $\kappa=Q$ or $\kappa=Q-1$. If $\kappa=Q$ event occurs then the resulting remainder satisfies $0 \le r <q$ [by Eq.~\ref{case1_1} of Theorem \ref{lem:nonnegative_remainder}]. Then, the reduction-1 stage is never executed. When $\kappa=Q-1$ occurs, then  $q \le r <2q$ [by Eq.~\ref{case2_1} of Theorem \ref{lem:nonnegative_remainder}]. which activates reduction-1 only operation. 
\par Table~\ref{tab:quotient_distribution} presents the distribution of the exact and approximate Barrett quotients for all possible intermediate partial products generated by the proposed word-wise multiplication under the Kyber $NTT$ parameter set. As shown in Algorithm~\ref{algo:bmm}, Line~\ref{line:x1} computes the product of two w-bit words : $aw_i$ and $bw_i$. For w=4, the maximum partial product is $15\times15=225$, therefore, 226 distinct possible values in the range [0,225]. Theorem~\ref{thm:exact_quotient} guarantees that the computed quotient $\kappa$ can only be equal to the exact quotient Q ($\left\lfloor\frac{c}{q}\right\rfloor$) or Q-1($\left\lfloor\frac{c}{q}\right\rfloor-1$). Table~\ref{tab:quotient_distribution} shows that the exact quotient is obtained for all 226 possible partial products when the left-shift amount is 0, 4, or 8 bits in line~\ref{line:<<1} of the Algorithm~\ref{algo:bmm}. For the 12-bit left shift, the exact quotient is produced for 221 out of 226 cases, whereas for the 16-bit left shift, the exact quotient is produced 154 out of 226 cases, with the remaining cases corresponding to the $\left\lfloor\frac{c}{q}\right\rfloor-1$ approximation predicted by Theorem~\ref{thm:exact_quotient}.
Therefore, the Table \ref{tab:quotient_distribution} demonstrates that the exact quotient $\kappa=Q$ produces for most of the intermediate values at every shift level and the event $\kappa=Q-1$ occurs infrequently. Thus, the quotient $\kappa = Q - 1$, which is the necessary condition for activating reduction-1 only, occurs for only a relatively small number of intermediate values. Hence, the probability of triggering reduction-1 is very low.
\begin{align}
P_{\chi1}=P(r\ge q),
\end{align}
remains significantly smaller than the probabilities of the remaining execution paths.

\subsubsection{Probability of Reduction-2 Only} 
The probability of reduction-2 only happened when reduction-1 is skipped. If $\kappa$ is $\left\lfloor\frac{c}{q}\right\rfloor$, then the intermediate remainder satisfies $0 \leq r<q$. Therefore, the Reduction-1 stage is not activated. Since the Barrett approximation yields the exact quotient in the overwhelming majority of executions, this Reduction-1 occurs with a very less probability. Thereafter, both branches of the Reduction~1 stage accumulate the intermediate partial remainder $r$, into the intermediate variable $R$. Therefore, Reduction-2 only depends on the accumulated value $R$ from previous iterations. $R=\sum r_i$. Even of each individual partial remainder satisfy $0 \leq r_i \leq q$. Thus Reduction-1 depends on Barrett approximation error which requires a rare event $\kappa=\left\lfloor\frac{c}{q}\right\rfloor-1$, however, Reduction-2 only depends on normal accumulation of remainders which requires $R+r\geq q$.
This accumulation occurs relatively much more frequently as compared to Barrett quotient underestimation, hence, $P_{\chi2} > P_{\chi1}$.
\subsubsection{Probability of Both Reduction} 
Reduction-1 in Algorithm~\ref{algo:bmm} is triggered when the accumulated intermediate value $R$, generated from the previous iterations, exceeds the modulus $q$. However, the occurrence of Reduction-1 does not necessarily imply the occurrence of Reduction-2. Although Reduction-1 reduces the accumulated value, the resulting intermediate value may still remain below the $q$ required to activate Reduction-2. Consequently, not all executions of Reduction-1 lead to the execution of Reduction-2. As discussed in Section~\ref{sec:red1}, the probability of triggering Reduction-1, denoted by $P_{\chi1}$, is relatively low. Since Reduction-2 can only occur for a subset of the cases in which Reduction-1 is activated, it follows that
$P_{\chi3} \leq P_{\chi1}$.
 \subsubsection{Probability of No Reduction} 
The no-reduction case occurs when none of the reduction events, namely the Reduction-1 only event, the Reduction-2 only event, and the both reductions event, is activated. As discussed in Section~\ref{sec:red1}, \(P_{\chi1}\) is already relatively small, which consequently makes $P_{\chi3}$ even less probable. Furthermore, the intermediate value $R$ is accumulated from $r$. Since, the accumulated value $R$ is, in most cases, smaller than modulus $q$. Therefore, the probability of triggering Reduction-2 only, $P_{\chi2}$, is also relatively low. As a result, a large fraction of executions bypass both reduction stages simultaneously. Since the no-reduction event represents the complement of all reduction events,
\begin{align}
P_{\chi0}
=
1-\left(P_{\chi1}+P_{\chi2}+P_{\chi3}\right),
\end{align}
it exhibits the highest occurrence probability among all execution paths.
\begin{algorithm}[!htb]
\caption{Statistical Reduction Monitoring (SRM)}
\label{algo:fault:bmm}
\textbf{Input} $\rho_1$, $\rho_2$;~~\textbf{Output} $f$
\begin{algorithmic}[1]

\State $n\chi_0=0$, $n\chi_1=0$, $n\chi_2=0$, $n\chi_3=0$
\State $\sigma = NR \times \frac{l}{w} \times \frac{l}{w}$  

\If {($\rho_1=0$) \textbf{and} ($\rho_2=0$)}
    \State $n\chi_0 \gets n\chi_0 + 1$
\EndIf

\If {($\rho_1=1$) \textbf{and} ($\rho_2=0$)}
    \State $n\chi_1 \gets n\chi_1 + 1$
\EndIf

\If {($\rho_1=0$) \textbf{and} ($\rho_2=1$)}
    \State $n\chi_2 \gets n\chi_2 + 1$
\EndIf

\If {($\rho_1=1$) \textbf{and} ($\rho_2=1$)}
    \State $n\chi_3 \gets n\chi_3 + 1$
\EndIf

\If {$check = 1$}

    \For{$i = 0$ to $3$}
        \State $\chi_i \gets \dfrac{n\chi_i \times 100}{\sigma}$
    \EndFor

    \For{$i = 0$ to $3$}
        \If{$\chi_i < \chi_{i,\min}$ \textbf{or} $\chi_i > c_{i,\max}$}
            \State $\chi_{i,\text{fault}} \gets 1$
        \Else
            \State $\chi_{i,\text{fault}} \gets 0$
        \EndIf
    \EndFor

\State $f= \chi_{0,\text{fault}}$ || $\chi_{1,\text{fault}}$ || $\chi_{2,\text{fault}}$ || $\chi_{3,\text{fault}}$

\EndIf
\State \textbf{return} $f$
\end{algorithmic}
\end{algorithm}

\begin{theorem}[Exact Quotient Property of Barrett Reduction]
\label{thm:exact_quotient}
Let $c = m2^{s}$ [in line 7 of Algorithm \ref{algo:bmm}], where $m$ denotes the partial product of two $w$-bit words and $s$ is the corresponding left-shift. Let $Q=\left\lfloor\frac{c}{q}\right\rfloor$ be the exact quotient, and let $\mu=\left\lfloor\frac{2^{k}}{q}\right\rfloor$ be the Barrett precomputed constant, where $k=2\times l$ is the Barrett scaling factor. The approximate quotient computed by Barrett reduction is $\kappa=\left\lfloor\frac{c\mu}{2^{k}}\right\rfloor$ [in line 8 of Algorithm \ref{algo:bmm}]. Then, $\kappa\in\{Q,Q-1\}.$
\end{theorem}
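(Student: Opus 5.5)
The plan is to use the standard Barrett error argument: bound $\kappa$ above by $Q$ and below by $Q-1$, using the range of $c$ permitted by Algorithm~\ref{algo:bmm}.

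\emph{Notation.} Write the division of $2^{k}$ by $q$ as $2^{k}=\mu q+e$ with $0\le e<q$. This is exactly the definition $\mu=\lfloor 2^{k}/q\rfloor$.

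\emph{Interpreting line~\ref{line:s1}.} I will first note that the bit slice $(c\mu)_{[2k-1\ldots k]}$ equals $\lfloor c\mu/2^{k}\rfloor$. This holds as soon as $c\mu<2^{2k}$, which follows from the bound $c<2^{k}$ below together with $\mu\le 2^{k}$.

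\emph{Main obstacle: the range of $c$.} This is the only step that depends on the word-wise structure, and I expect it to be the crux. Here $m=aw_i\cdot bw_j\le (2^{w}-1)^2<2^{2w}$. The shift is $s=(i+j)w$ with $i,j\le l/w-1$, so $s\le 2l-2w$. Hence
\begin{align*}
c=m2^{s}<2^{2w}\cdot 2^{2l-2w}=2^{2l}=2^{k}.
\end{align*}
This is exactly why $k=2l$ is chosen. If the shift could exceed $2l-2w$, the lower bound below would fail.

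\emph{Expressing the estimate.} Substituting $\mu=(2^{k}-e)/q$ gives
\begin{align*}
\frac{c\mu}{2^{k}}=\frac{c}{q}-\delta,\qquad \delta=\frac{ce}{q\,2^{k}}.
\end{align*}

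\emph{Upper bound.} Since $\delta\ge 0$, we have $c\mu/2^{k}\le c/q$. Taking floors gives $\kappa\le Q$.

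\emph{Lower bound.} Since $e<q$, we get $\delta<c/2^{k}$, and the range bound gives $c/2^{k}<1$. For any real $x$ and $0\le\delta<1$, $\lfloor x-\delta\rfloor\ge\lfloor x\rfloor-1$. This holds because $x-\delta>\lfloor x\rfloor-1$ and $\lfloor x\rfloor-1$ is an integer. Applying it with $x=c/q$ gives $\kappa\ge Q-1$.

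Combining the two bounds yields $\kappa\in\{Q,Q-1\}$. As a consistency check, this matches Table~\ref{tab:quotient_distribution}, where $\kappa=Q-1$ appears only at the largest shifts, where $c/2^{k}$, and hence $\delta$, is largest.
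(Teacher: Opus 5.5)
Your proof is correct and follows essentially the same route as the paper. Both write $\mu/2^{k}=1/q-e/(q2^{k})$, so that $\kappa=\lfloor c/q-\delta\rfloor$ with $0\le\delta<1$, and conclude that the floor can drop by at most one. The difference is that you explicitly justify the needed bound $c<2^{k}$ from $m<2^{2w}$ and $s\le 2l-2w$. The paper simply asserts $0\le \delta c/2^{k}<1$ without establishing the range of $c$.
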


\begin{proof}
Since $\mu=\left\lfloor\frac{2^{k}}{q}\right\rfloor$, there exists an $\delta$, $0 < \delta < 1$  such that $\mu+\delta = \frac{2^{k}}{q}$ .
Therefore,
\begin{equation}
\frac{\mu}{2^{k}}=\frac{1}{q}-\frac{\delta}{2^{k}}.
\end{equation}
Now, substituting this expression into the Barrett approximate quotient  $\kappa=\left\lfloor\frac{c\mu}{2^{k}}\right\rfloor$ gives

\begin{align} 
\kappa &=\left\lfloor{c\left(\frac{1}{q}-\frac{\delta}{2^{k}}\right)}\right\rfloor \\
&= \left\lfloor{\frac{c}{q}-\frac{\delta c}{2^{k}}}\right\rfloor
\end{align}
Let us assume the exact division as
\begin{equation}
\frac{c}{q}=Q+f,
where ~Q=\left\lfloor\frac{c}{q}\right\rfloor and ~f=frac(\frac{c}{q}),~ 0\le f<1.
\end{equation}
Hence,
$\kappa=\left\lfloor{ Q+f-\frac{\delta c}{2^{k}}}\right\rfloor$. Since $0\le\frac{\delta c}{2^{k}}<1$, the subtraction term can reduce the integer part by at most one. Therefore, $\kappa\in\{Q,Q-1\}$.

If $f>\frac{\delta c}{2^{k}}$, then
\begin{equation}
Q<Q+f-\frac{\delta c}{2^{k}}<Q+1,
\end{equation}
which yields $\kappa=Q$.

Otherwise, $f\le\frac{\delta c}{2^{k}}$, which implies $Q-1<Q+f-\frac{\delta c}{2^{k}}<Q$, and therefore $\kappa=Q-1.$

Thus,
\begin{equation}
\kappa=
\begin{cases}
Q,&
\displaystyle
f=\operatorname{frac}\!\left(\frac{c}{q}\right)>
\frac{\delta c}{2^{k}},\\[2ex]
Q-1,
&
\displaystyle
f=\operatorname{frac}\!\left(\frac{c}{q}\right)\le
\frac{\delta c}{2^{k}}.
\end{cases}
\end{equation}

Hence, completes the proof.
\end{proof}

\begin{theorem}
\label{lem:nonnegative_remainder}
Let
$\kappa=\left\lfloor\frac{c\mu}{2^{k}}\right\rfloor$, exact quotient $Q=\left\lfloor\frac{c}{q}\right\rfloor$ 
where $\mu=\left\lfloor\frac{2^{k}}{q}\right\rfloor$ and let the Barrett remainder be computed as $r=c-\kappa q.$
Then the remainder produced by the proposed Barrett modular multiplication algorithm always satisfies  $0 \le r < 2q.$
\end{theorem}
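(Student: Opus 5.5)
The plan is to reduce the statement to Theorem~\ref{thm:exact_quotient} through the division identity for $c$ by $q$, and then handle the two possible values of $\kappa$ separately. We write $c = Qq + \rho$ with $Q=\left\lfloor\frac{c}{q}\right\rfloor$ and $0\le \rho<q$, so that $\rho = c \bmod q$. We then substitute this into $r=c-\kappa q$, which gives
\begin{equation*}
r = \rho + (Q-\kappa)q .
\end{equation*}
Theorem~\ref{thm:exact_quotient} guarantees $\kappa\in\{Q,Q-1\}$, so the proof splits into two cases. These are exactly the cases the paper later refers to as Eq.~(case1\_1) and Eq.~(case2\_1).

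\textbf{Case $\kappa=Q$.} Here $r=\rho$, and therefore
\begin{equation*}
0\le r<q .
\end{equation*}

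\textbf{Case $\kappa=Q-1$.} Here $r=\rho+q$, and therefore
\begin{equation*}
q\le r<2q .
\end{equation*}

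Taking the union of the two cases gives $0\le r<2q$. As a by-product, the proof also records that $r\ge q$ holds precisely when $\kappa=Q-1$. This is the fact used in Section~\ref{sec:red1} to tie Reduction-1 to quotient underestimation.

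The main obstacle is not this case split but the hypothesis it depends on. Theorem~\ref{thm:exact_quotient} needs $0\le \delta c/2^{k}<1$. With $0\le\delta<1$, this holds as long as $c<2^{k}=2^{2l}$. So I will check explicitly that the shifted partial product in line~\ref{line:<<1} of Algorithm~\ref{algo:bmm} satisfies this bound. The relevant quantities are $c=m2^{s}$, $m<2^{2w}$, and $s=(i+j)w\le 2l-2w$, which give
\begin{equation*}
c<2^{2w}\cdot 2^{2l-2w}=2^{2l}.
\end{equation*}
Hence $\delta c/2^{k}<1$, and Theorem~\ref{thm:exact_quotient} applies.

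I would also point out that $\delta$ may equal $0$ when $q$ divides $2^{k}$. In that case $\kappa=Q$ exactly, which is harmless. Finally, nonnegativity of $r$ needs no separate argument: it follows from $\kappa\le Q$, which holds because $\mu\le 2^{k}/q$ gives $c\mu/2^{k}\le c/q$.
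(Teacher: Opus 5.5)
Your proof is correct and follows the paper's route. Both arguments invoke Theorem~\ref{thm:exact_quotient} to get $\kappa\in\{Q,Q-1\}$ and then bound $r=(c-Qq)+(Q-\kappa)q$ by Euclidean division in each case; the paper also proves $r\ge 0$ separately via $\mu\le 2^{k}/q$, the step you note is not needed. Your explicit check that $c\le(2^{w}-1)^{2}\,2^{2l-2w}<2^{2l}=2^{k}$, so that $\delta c/2^{k}<1$, is a worthwhile addition, since the paper's proof of Theorem~\ref{thm:exact_quotient} asserts this bound without justification.
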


\begin{proof}
Since
$\mu=\left\lfloor\frac{2^{k}}{q}\right\rfloor$
therefore, the floor function gives
\begin{equation}
\mu\le\frac{2^{k}}{q}.
\end{equation}

Now, multiply both sides by the non-negative integer $c$. Then we have, 
\begin{equation}
c\mu\le\frac{c\,2^{k}}{q}.
\end{equation}

Dividing both sides by $2^{k}$ gives
\[
\frac{c\mu}{2^{k}}\le\frac{c}{q}.
\]

Since the floor function preserves inequality,
\[
\kappa=\left\lfloor\frac{c\mu}{2^{k}}\right\rfloor\le\left\lfloor\frac{c}{q}\right\rfloor.
\]

Hence,
\begin{equation}
\left\lfloor\frac{c}{q}\right\rfloor-\kappa\ge0.
\end{equation}

Multiplying both sides by the positive modulus $q$ gives
\begin{equation}
\left(\left\lfloor\frac{c}{q}\right\rfloor-\kappa\right)q\ge0.
\end{equation}

Now, we have
\[
\begin{aligned}
r&=c-\kappa q\\
&=\left(\frac{c}{q}-\kappa\right)q\ge \left(\left\lfloor\frac{c}{q}\right\rfloor-\kappa\right)q \ge 0.
\end{aligned}
\]
Therefore, $r\ge0.$
Thus, the remainder computed by the proposed Barrett modular multiplication algorithm is always non-negative.
\par
Now, from theorem \ref{thm:exact_quotient}, $\kappa \in \{Q, Q-1\}$. Therefore, only two cases need to be considered.\\
\textbf{Case 1:} Exact Quotient ($\kappa=Q$) \\ 
then the remainder becomes $r=c-\kappa q=c-Qq$.
By the Euclidean division theorem, clearly we can write 
\begin{equation}
\label{case1}
    0 \le c-Qq < q, 
\end{equation}
which implies
\begin{equation}
\label{case1_1}
  0 \le r < q   
\end{equation}
\textbf{Case 2:} Approximate Quotient ($\kappa=Q-1$) \\ 
Substituting $\kappa=Q-1$ into the remainder equation gives
$
\begin{aligned}
 r&=c- \kappa q\\
 &=c-(Q-1)q\\ 
 &=(c-Qq)+q  
\end{aligned}
$
\\
Since, $0 \le c-Qq < q$, by the equation \ref{case1}\\
Adding q throughout the inequality gives, $q \le c-Qq +q< 2q$.
Therefore,
\begin{equation}
\label{case2_1}
  q \le r < 2q.  
\end{equation}
\\
Hence, by combining above two cases, the intermediate remainder computed by the proposed Barrett modular multiplication is always satisfies $0 \le r < 2q$.

\end{proof}

\section{Hardware Architecture of SRM Protected $BMM$}
\label{sec:bmm:hw}
As shown in Fig. \ref{fig:arch_bmm}, the proposed $BMM$ uses 2 multipliers $X_1$ and $X_2$ where $X_1$ is used to compute $c$ by multiplying $aw_i$ and $bw_j$ (line \ref{line:x1} of Algorithm \ref{algo:bmm}). The $X_2$ multiplier is used to compute $\kappa$ by multiplying $c$ and $\mu$ (line \ref{line:x2} of Algorithm \ref{algo:bmm}).Among the two subtraction blocks, $-_1$ and $-_2$, $-_1$ is used to compute $r$ at line \ref{line:s1} of Algorithm \ref{algo:bmm}, whereas $-_2$ is used to compute Reduction-1 and Reduction-2. The subtractor $-_2$ also computes $\rho_1$ and $\rho_2$, where $\rho_1$ is set to 1 if the Reduction-1 condition is satisfied; otherwise, it is set to 0. Similarly, $\rho_2$ is asserted if the Reduction-2 condition is satisfied; otherwise, it remains deasserted. 
The $\chi$~$gen$ block takes $\rho_1$ and $\rho_2$ as input and run Algorithm \ref{algo:fault:bmm} and calculate $\chi_0$, $\chi_1$ and $\chi_2$. Although the calculations of $\chi_0$, $\chi_1$, and $\chi_2$ in Algorithm \ref{algo:fault:bmm} are expressed as percentages (base 100), the actual $SRM$ implementation uses 128 instead of 100 to avoid division operations. This is because division by 100 is computationally expensive in hardware, whereas division by 128 can be efficiently implemented using a 7-bit shift operation. From the experimental results reported in Table \ref{tab:kyber:c}, Table \ref{tab:kyber:q}, Table \ref{tab:kyber:r}, Table \ref{tab:ckks:c}, Table \ref{tab:ckks:q}, and Table \ref{tab:ckks:r}, it is observed that $\chi_3$ cannot exclusively detect any fault occurrence that is not already detected by $\chi_0$, $\chi_1$, or $\chi_2$. Therefore, to reduce hardware overhead, the $SRM$ omits the computation of $\chi_3$.Finally, the $matcher$ block takes $\chi_0$, $\chi_1$, and $\chi_2$ from the $\chi$~$gen$ block and flags the final fault occurrence at $f$.
This proposed $BMM$ has two pipeline stages. In the first pipeline stage, the computation is performed up to the shifted $c$, as stated in line \ref{line:<<1} of Algorithm \ref{algo:bmm}. In the second pipeline stage, the final computation of $R$ is performed through two reduction conditions.
\begin{figure}[!htb]
\centering
\includegraphics[width=0.5\textwidth]{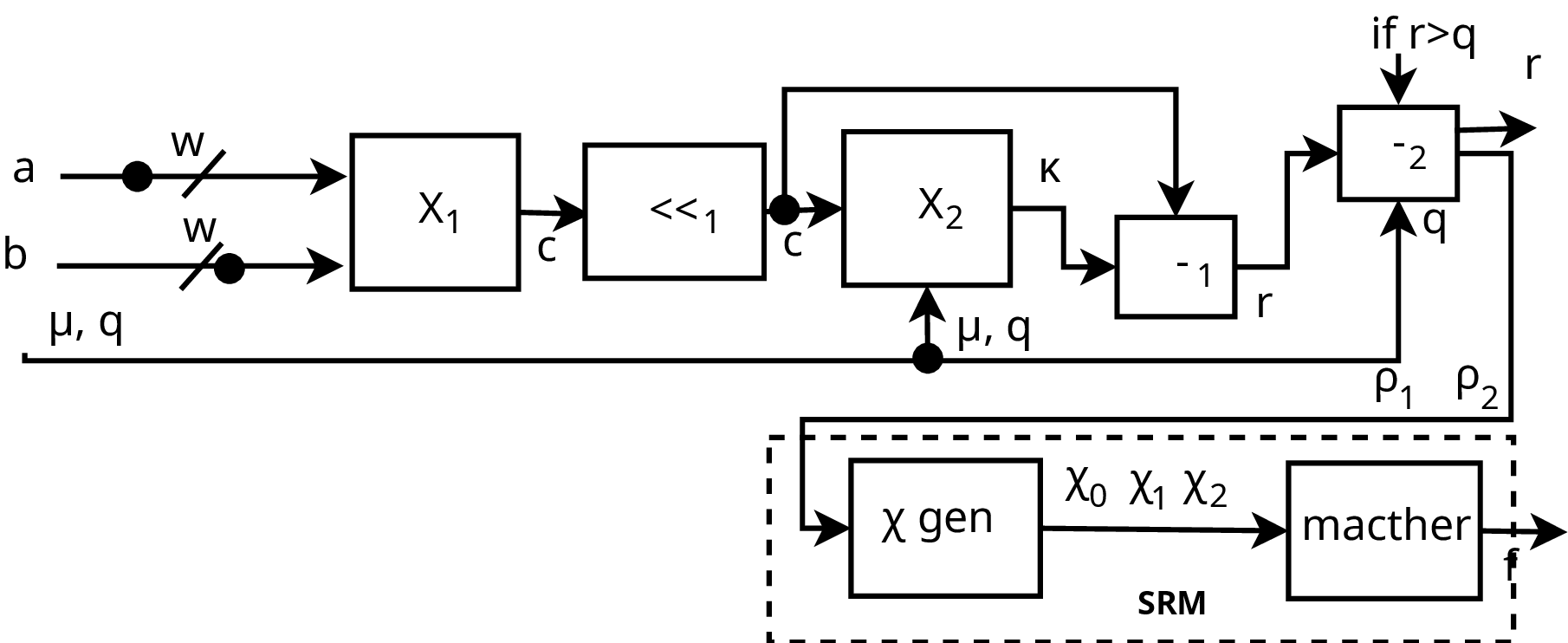}
\vspace{-5pt}
\caption{Hardware Architecture of Wordwise BMM}
\vspace{-5pt}
\label{fig:arch_bmm}
\end{figure}    
\section{Secret Vectors in $NTT$ for Kyber \& CKKS}
\label{sec:kyber:ckks}
In this paper, we have studied two cases to validate our proposed fault detection approach using a multi-conditional reduction path: secret vector generation in the CRYSTALS-Kyber PQC algorithm and the CKKS FHE algorithm. Both cases validate the probability distribution of the four reduction situations explained in Sec.~\ref{sec:expl}.
\subsection{Secret Vector in Kyber}
\label{sec:keygen:kyber}
The three Kyber variants, namely Kyber-512, Kyber-768, and Kyber-1024, consist of three main processes: Key Generation, Encryption, and Decryption. Let us consider the example of Kyber-768, which requires 3 $NTT$ operations on the secret vector $s$ and 3 $NTT$ operations on the error polynomial $e$. Similarly, the encryption and decryption processes also require multiple $NTT$ and inverse $NTT$ operations. Our word-wise $BMM$ is the most resource, time, and power critical operation among all these $NTT$ computations. Although $BMM$ is used in different stages of Kyber key generation, encryption, and decryption, this paper discusses the secret vector generation process in greater detail.
\par As per FIPS203 \cite{FIPS203}, these secret polynomials are generated using specific PRF and CBD processes, where the secret vector $s$ is sampled in a uniform and random manner. As shown in Algorithm~\ref{algo:bmm}, the word-wise $BMM$ used inside the $NTT$ takes $s$ as input $a$ and the twiddle factor $\omega$ as input $b$. This forward $NTT$ transformation is shown in line \ref{line:s} of Kyber Key Generation, shown in  Algorithm~\ref{algo:kyber:keygen}. Our SR method monitors the behavior of the $NTT$ operation running with the word-wise $BMM$ by tracking how many times it executes without Reduction~1 and Reduction~2 ($n\chi_0$), with only Reduction~1 ($n\chi_1$), with only Reduction~2 ($n\chi_2$), and with both Reduction~1 and Reduction~2 ($n\chi_3$). The percentage values of $\chi_0$, $\chi_1$, $\chi_2$, and $\chi_3$ remain within the ranges of $77.66$--$80.41$, $0.14$--$0.61$, $19.18$--$21.91$, and $0.00$--$0.24$, respectively, when no fault occurs throughout the 1024 $NTT$ iterations. 
\begin{algorithm}[!htbp]
\caption{Kyber.CPA.PKE.KeyGen()}
\label{algo:kyber:keygen}
\begin{algorithmic}[1]
\State \textbf{Output:} Secret key $sk \in \mathbb{B}^{12 \cdot k \cdot n / 8}$
\State \textbf{Output:} Public key $pk \in \mathbb{B}^{12 \cdot k \cdot n / 8 + 32}$
\State $d \leftarrow \mathbb{B}^{32}$
\State $(\rho, \sigma) := G(d)$
\State $N := 0$
\For{$i = 0$ \textbf{to} $k-1$}
  \For{$j = 0$ \textbf{to} $k-1$}
    \State $\hat{A}[i][j] := \text{Parse}(XOF(\rho, j, i))$
  \EndFor
\EndFor
\For{$i = 0$ \textbf{to} $k-1$}
  \State $s[i] := \text{CBD}_{\eta_1}(\text{PRF}(\sigma, N))$ \label{line:s}
  \State $N := N + 1$
\EndFor
\For{$i = 0$ \textbf{to} $k-1$}
  \State  $e[i] := \text{CBD}_{\eta_1}(\text{PRF}(\sigma, N))$ \label{line:e}
  \State $N := N + 1$
\EndFor
\State $\hat{s} := \text{NTT}(s)$
\State $\hat{e} := \text{NTT}(e)$
\State $\hat{t} := \hat{A} \circ \hat{s} + \hat{e}$
\State $pk := (E_{12}(\hat{t} \bmod q) \| \rho)$
\State $sk := E_{12}(\hat{s} \bmod q)$
\Return $(pk, sk)$
\end{algorithmic}
\end{algorithm}

\subsection{Secret Vector in CKKS}
\label{sec:keygen:ckks}
As mentioned in Section~\ref{sec:keygen:kyber}, the key generation process of the \textit{CKKS} algorithm has similar requirements for the word-wise $BMM$ operation inside its $NTT$ computation. The primary difference is that the word-wise $BMM$ used in the CKKS $NTT$ operates with different polynomial degrees $n-1$ and moduli $q$. The secret vector generation $s$ of CKKS is generated from a ternary or Gaussian distribution. As shown in line \ref{line:ckks:s} of Algorithm~\ref{algo:ckks:keygen}, $s$ is passed through an $NTT$ operation. Similar to the Kyber variants, our wordwise $BMM$ serves as the fundamental building block of this $NTT$ operation, which is also the most power- and resource-critical component. Our SR method monitors the behavior of the $NTT$ operation running with the worwise $BMM$ by tracking Reduction-1 and Reduction-2. The percentages corresponding to the four possible conditions, namely no reduction $\chi_0$, only Reduction-1 $\chi_1$, only Reduction-2 $\chi_2$, and both reductions 1 and 2 $\chi_3$, range from 80.06\%--83.81\%, 0.07\%--0.11\%, 16.11\%--19.85\%, and 0.00\%--0.01\%, respectively.
\begin{algorithm}[H]
\caption{CKKS.KeyGen()}
\label{algo:ckks:keygen}
\begin{algorithmic}[1]

\State \textbf{Input:} Ring dimension $N$, modulus $q$
\State \textbf{Output:} Secret key $sk$
\State \textbf{Output:} Public key $pk$

\State $a \leftarrow R_q$
\State $s \leftarrow \chi^{N}$ \label{line:ckks:s}
\State $e \leftarrow \chi^{N}$ \label{line:ckks:e}

\State $\hat{s} := \text{NTT}(s)$ \label{line:ckks:s}
\State $\hat{e} := \text{NTT}(e)$
\State $\hat{a} := \text{NTT}(a)$

\State $\hat{b} := -(\hat{a} \circ \hat{s} + \hat{e}) \bmod q$

\State $pk := (\hat{b}, \hat{a})$
\State $sk := \hat{s}$

\Return $(pk, sk)$

\end{algorithmic}
\end{algorithm}

\section{Fault Injection in $NTT$ for Kyber \& CKKS}
\label{sec:injection}
In this paper, we study two types of fault injection on the $NTT$ intermediate variables $c$, $\kappa$, and $r$: random bit-flip faults and burst bit-flip faults. In random fault injection, bits are flipped from $0$ to $1$ at random positions in $c$, $\kappa$, and $r$, whereas in burst fault injection, consecutive bit positions are flipped in these variables. For an $(n-1)$-degree polynomial, the $NTT$ requires $\frac{n}{2}\log n$ butterfly iterations, and in each iteration, the $BMM$ is invoked. For our test cases, Kyber uses $n=256$ and CKKS uses $n=4096$, resulting in $1024$ and $24576$ $NTT$ iterations, respectively. Random and burst faults are injected during each $NTT$ iteration for both Kyber and CKKS. Table \ref{tab:parameters} presents the description of all parameters used in this study. Here, $\lambda$ represents the number of faulty $NTT$ iterations, and $\phi$ denotes the number of faulty bits in the faulty $NTT$ loop. 
As shown in Table \ref{tab:kyber:c}, Table \ref{tab:kyber:q} and Table \ref{tab:kyber:r}, faults are injected into the three intermediate registers $c$, $\kappa$, and $r$, respectively, used in the $BMM$ of Kyber’s $NTT$ by varying $\lambda$ and $\phi$. It is observed that up to 128 faulty $NTT$ iterations, with any number of injected faults, cause the values of $\chi_0$, $\chi_1$, and $\chi_2$ to deviate from their corresponding values obtained when no faults occur in any $NTT$ loop. Therefore, if the fault persists for at least 128 $NTT$ iterations in Kyber’s $NTT$, our model can detect any number of faulty bits with 100\% efficiency. When the number of faulty $NTT$ iterations is $64$ or fewer, some values of $\chi_0$, $\chi_1$, and $\chi_2$ may coincide with their corresponding values obtained when no faults occur in any $NTT$ loop. Therefore, beyond $64$ faulty iterations during the secret vector computation process in Kyber’s $NTT$, the fault detection efficiency decreases from 100\%. The detailed values of $\chi_0$, $\chi_1$, $\chi_2$ and $\chi_3$ for the secret vector computation process in Kyber $NTT$ are shown in Table \ref{tab:kyber:c}, Table \ref{tab:kyber:q} and Table \ref{tab:kyber:r}.

\begin{table}[!htb]
\centering
\begin{tabular}{|c|c|c|c|c|c|}
\hline
$\lambda$ & $\phi$ & $\chi_0$ (\%) & $\chi_1$ (\%) & $\chi_2$ (\%) & $\chi_3$ (\%) \\ \hline
$\times$&0&\textbf{77.66--80.41} &\textbf{0.14--0.61}&\textbf{19.18--21.91}&\textbf{0.00, 0.24}\\\hline

\multicolumn{6}{|c|}{\textbf{Random Fault in c}} \\ \hline

1024&1&\cellcolor{gray}{56.11--58.72} &\cellcolor{gray}{2.31--3.57}&\cellcolor{gray}{38.24--40.87}&0.14, 0.60\\
512&1&\cellcolor{gray}{65.49--68.58} &\cellcolor{gray}{1.16--2.22}&\cellcolor{gray}{29.61--32.76}&0.05, 0.37\\
128&1&\cellcolor{gray}{73.81--76.81} &\cellcolor[gray]{0.8}{0.41--1.09}&\cellcolor{gray}{22.41--25.55}&0.00, 0.26\\
64&1&75.42--78.50 &0.29--0.82&20.93--23.87&0.00, 0.25\\\hline
1024&2&\cellcolor{gray}{48.81--51.43}&\cellcolor{gray}{5.10--6.88}&\cellcolor{gray}{41.80--44.39}&\cellcolor{gray}{0.47, 1.17}\\
512&2&\cellcolor{gray}{61.39--64.89} &\cellcolor{gray}{2.47--3.99}&\cellcolor{gray}{31.62--34.67}&0.17, 0.68\\
128&2&\cellcolor{gray}{72.79--75.97} &\cellcolor{gray}{0.73--1.55}&\cellcolor{gray}{22.91--26.12}&0.02, 0.35\\
64&2&74.87--78.09 &0.47--1.04&\cellcolor[gray]{0.8}{21.14--24.28}&0.00, 0.27\\\hline
1024&3&\cellcolor{gray}{45.66--48.32} &\cellcolor{gray}{7.44--9.60}&\cellcolor{gray}{41.95--44.43}&\cellcolor{gray}{0.81, 1.63}\\
512&3&\cellcolor{gray}{60.12--63.30} &\cellcolor{gray}{3.73--5.28}&\cellcolor{gray}{31.52--34.71}&\cellcolor{gray}{0.36, 0.97}\\
128&3&\cellcolor{gray}{72.41--75.42} &\cellcolor{gray}{1.01--1.89}&\cellcolor{gray}{22.91--26.00}&0.07, 0.42\\
64&3&\cellcolor[gray]{0.8}{74.91--77.97} &\cellcolor[gray]{0.8}{0.51--1.28}&\cellcolor[gray]{0.8}{21.08--24.16}&0.02, 0.33\\\hline

\multicolumn{6}{|c|}{\textbf{Burst Fault in c}} \\ \hline

1024&1&\cellcolor{gray}{55.97--58.66} &\cellcolor{gray}{2.30--3.59}&\cellcolor{gray}{38.19--40.74}&0.12, 0.56\\
512&1&\cellcolor{gray}{65.56--68.57}&\cellcolor{gray}{0.98--2.18}&\cellcolor{gray}{29.54--32.65}&0.04, 0.37\\
128&1&\cellcolor{gray}{73.63--76.89}&0.42--1.06&\cellcolor{gray}{22.34--25.69}&0.00, 0.27\\
64&1&75.41--78.49 &0.28--0.88&20.91--23.99&0.00, 0.24\\\hline
1024&2&\cellcolor{gray}{57.16--59.95} &\cellcolor{gray}{2.55--3.98}&\cellcolor{gray}{36.30--39.13}&\cellcolor{gray}{0.28, 0.85}\\
512&2&\cellcolor{gray}{66.02--69.05} &\cellcolor{gray}{1.25--2.31}&\cellcolor{gray}{28.80--31.92}&0.10, 0.51\\
128&2&\cellcolor{gray}{73.85--77.03} &0.42--1.12&\cellcolor{gray}{21.98--25.25}&0.01, 0.27\\
64&2&75.53--78.83 &0.29--0.86&20.54--23.85&0.00, 0.26\\\hline
1024&3&\cellcolor{gray}{57.86--60.66} &\cellcolor{gray}{2.51--4.04}&\cellcolor{gray}{35.53--38.17}&\cellcolor{gray}{0.36, 1.03}\\
512&3&\cellcolor{gray}{66.60--69.33} &\cellcolor{gray}{1.30--2.35}&\cellcolor{gray}{28.26--31.22}&0.13, 0.60\\
128&3&\cellcolor{gray}{74.21--77.48} &0.44--1.07&\cellcolor[gray]{0.8}{21.61--24.96}&0.00, 0.31\\
64&3&75.81--78.54 &0.28--0.87&20.79--23.57&0.00, 0.28\\\hline
\multicolumn{6}{|c|}{$\lambda:$ \# Faulty Loops, $\phi:$ \# Faulty bits, l=12, w=4, q=3329, n=256} \\ \hline
\end{tabular}
\caption{Random and Burst Fault in c for Kyber }
\label{tab:kyber:c}
\end{table}

\begin{table}[h]
\centering
\begin{tabular}{|c|c|c|c|c|c|}
\hline
$\lambda$ & $\phi$ & $\chi_0$ (\%) & $\chi_1$ (\%) & $\chi_2$ (\%) & $\chi_3$ (\%) \\ \hline
$\times$&0&\textbf{77.66--80.41} &\textbf{0.14--0.61}&\textbf{19.18--21.91}&\textbf{0.00, 0.24}\\\hline
\multicolumn{6}{|c|}{\textbf{Random Fault in $\kappa$}} \\ \hline
1024&1&\cellcolor{gray}{94.64--96.46}&\cellcolor{gray}{3.54--5.36}&\cellcolor{gray}{0.00--0.02}&0.00, 0.04\\
512&1&\cellcolor{gray}{85.83--88.56} &\cellcolor{gray}{1.81--2.93}&\cellcolor{gray}{9.17--11.59}&0.00, 0.15\\
128&1&\cellcolor[gray]{0.8}{79.73--82.51}&\cellcolor[gray]{0.8}{0.52--1.23}&\cellcolor[gray]{0.8}{16.45--19.35}&0.00, 0.21\\
64&1&78.67--81.60 &0.33--0.91&17.73--20.68&0.00, 0.22\\\hline
1024&2&\cellcolor{gray}{98.43--99.20} &\cellcolor{gray}{0.80--1.57}&\cellcolor{gray}{0.00--0.00}&0.00, 0.01\\
512&2&\cellcolor{gray}{87.48--90.30}&0.49--1.15&\cellcolor{gray}{9.07--11.70}&0.00, 0.17\\
128&2&\cellcolor[gray]{0.8}{9.93--82.91} &0.15--0.75&16.51--19.57&0.00, 0.23\\
64&2&78.85--81.64 &0.15--0.66&17.84--20.67&0.00, 0.26\\\hline
1024&3&\cellcolor{gray}{99.37--99.86} &0.14--0.63&\cellcolor{gray}{0.00--0.00}&0.00, 0.00\\
512&3&\cellcolor{gray}{88.21--90.52} &0.17--0.61&\cellcolor{gray}{9.09--11.44}&0.00, 0.16\\
128&3&\cellcolor[gray]{0.8}{80.11--83.37} &0.15--0.59&16.16--19.37&0.00, 0.24\\
64&3&78.78--81.96 &0.17--0.62&17.63--20.82&0.00, 0.25\\\hline
\multicolumn{6}{|c|}{\textbf{Burst Fault in $\kappa$}} \\ \hline
1024&1&\cellcolor{gray}{94.76--96.48} &\cellcolor{gray}{3.50--5.24}&\cellcolor{gray}{0.00--0.02}&0.00, 0.05\\
512&1&\cellcolor{gray}{86.04--88.74} &\cellcolor{gray}{1.79--2.99}&\cellcolor{gray}{9.03--11.66}&0.00, 0.17\\
128&1&\cellcolor[gray]{0.8}{79.43--82.55}&0.48--1.23&\cellcolor[gray]{0.8}{16.50--19.70}&0.00, 0.23\\
64&1&78.74--81.60 &0.31--0.93&17.85--20.54&0.00, 0.27\\\hline
1024&2&\cellcolor{gray}{95.50--97.09} &\cellcolor{gray}{2.91--4.50}&\cellcolor{gray}{0.00--0.01}&0.00, 0.03\\
512&2&\cellcolor{gray}{86.48--88.89} &\cellcolor{gray}{1.46--2.64}&\cellcolor{gray}{8.97--11.48}&0.00, 0.16\\
128&2&\cellcolor[gray]{0.8}{79.62--82.70} &0.43--1.12&\cellcolor[gray]{0.8}{16.33--19.57}&0.00, 0.22\\
64&2&78.63--81.62 &0.30--0.87&17.84--20.71&0.00, 0.23\\\hline
1024&3&\cellcolor{gray}{96.20--97.50} &\cellcolor{gray}{2.50--3.80}&\cellcolor{gray}{0.00--0.01}&0.00, 0.02\\
512&3&\cellcolor{gray}{86.70--89.28} &\cellcolor{gray}{1.26--2.24}&\cellcolor{gray}{9.00--11.50}&0.00, 0.17\\
128&3&\cellcolor[gray]{0.8}{79.73--82.77} &0.41--1.07&\cellcolor[gray]{0.8}{16.46--19.48}&0.00, 0.23\\
64&3&78.53--81.61 &0.26--0.85&17.78--20.86&0.00, 0.24\\\hline

\multicolumn{6}{|c|}{$\lambda:$ \# Faulty Loops, $\phi:$ \# Faulty bits, l=12, w=4, q=3329, n=256} \\ \hline
\end{tabular}
\caption{Random and Burst Fault in $\kappa$ for Kyber}
\label{tab:kyber:q}
\end{table}

\begin{table}[!htbp]
\centering
\begin{tabular}{|c|c|c|c|c|c|}
\hline
$\lambda$ & $\phi$ & $\chi_0$ (\%) & $\chi_1$ (\%) & $\chi_2$ (\%) & $\chi_3$ (\%) \\ \hline

$\times$&0&\textbf{77.66--80.41} &\textbf{0.14--0.61}&\textbf{19.18--21.91}&\textbf{0.00, 0.24}\\\hline

\multicolumn{6}{|c|}{\textbf{Random Fault in r}} \\ \hline
1024&1&\cellcolor{gray}{69.79--72.03} &\cellcolor{gray}{2.46--3.88}&\cellcolor{gray}{24.59--26.79}&0.13, 0.54\\
512&1&\cellcolor{gray}{72.73--75.20}&\cellcolor{gray}{1.29--2.38}&\cellcolor{gray}{22.75--25.29}&0.03, 0.36\\
128&1&75.63--78.65 &0.46--1.11&20.52--23.56&0.00, 0.25\\
64&1&76.52--79.38 &0.30--0.86&20.07--22.75&0.00, 0.24\\\hline
1024&2&\cellcolor{gray}{64.52--67.05}&\cellcolor{gray}{4.24--6.21}&\cellcolor{gray}{26.88--29.67}&\cellcolor{gray}{0.31, 0.90}\\
512&2&\cellcolor{gray}{70.08--72.76} &\cellcolor{gray}{2.22--3.46}&\cellcolor{gray}{24.14--26.73}&0.11, 0.53\\
128&2&74.80--77.99 &\cellcolor{gray}{0.61--1.39}&20.88--24.03&0.00, 0.29\\
64&2&76.11--78.96 &0.37--0.97&20.24--23.12&0.00, 0.26\\\hline
1024&3&\cellcolor{gray}{59.47--62.23}&\cellcolor{gray}{5.77--8.00}&\cellcolor{gray}{30.09--32.66}&\cellcolor{gray}{0.47, 1.22}\\
512&3&\cellcolor{gray}{67.33--70.05}&\cellcolor{gray}{2.96--4.41}&\cellcolor{gray}{25.80--28.91}&0.20, 0.68\\
128&3&\cellcolor{gray}{74.25--77.50}&\cellcolor{gray}{0.81--1.66}&\cellcolor[gray]{0.8}{21.28--24.41}&0.02, 0.34\\
64&3&75.72--78.95 &0.44--1.16&20.18--23.34&0.01, 0.27\\\hline
\multicolumn{6}{|c|}{\textbf{Burst Fault in r}} \\ \hline

1024&1&\cellcolor{gray}{69.56--72.06}&\cellcolor{gray}{2.53--3.85}&\cellcolor{gray}{24.56--26.95}&0.14, 0.56\\
512&1&\cellcolor{gray}{72.82--75.50}&\cellcolor{gray}{1.36--2.38}&\cellcolor{gray}{22.55--25.22}&0.04, 0.35\\
128&1&75.79--78.58 &0.44--1.12&20.56--23.36&0.00, 0.27\\
64&1&76.56--79.43 &0.28--0.88&19.90--22.80&0.00, 0.26\\\hline
1024&2&\cellcolor{gray}{66.61--69.10}&\cellcolor{gray}{2.93--4.44}&\cellcolor{gray}{26.82--29.22}&\cellcolor[gray]{0.8}{0.20, 0.71}\\
512&2&\cellcolor{gray}{70.97--73.67}&\cellcolor{gray}{1.53--2.57}&\cellcolor{gray}{23.97--26.76}&0.07, 0.44\\
128&2&75.21--78.14 &0.48--1.20&20.76--23.95&0.00, 0.29\\
64&2&76.17--79.21 &0.30--0.94&20.02--23.07&0.00, 0.23\\\hline
1024&3&\cellcolor{gray}{64.61--67.36}&\cellcolor{gray}{6.86--8.95}&\cellcolor{gray}{24.49--26.87}&\cellcolor{gray}{0.26, 0.80}\\
512&3&\cellcolor{gray}{69.98--72.75} &\cellcolor{gray}{3.47--5.15}&\cellcolor{gray}{22.63--25.28}&0.10, 0.49\\
128&3&75.07--77.99 &\cellcolor{gray}{0.98--1.86}&20.49--23.35&0.01, 0.30\\
64&3&75.75--79.34 &\cellcolor{gray}{0.56--1.33}&19.67--23.30&0.00, 0.26\\\hline
\multicolumn{6}{|c|}{$\lambda:$ \# Faulty Loops, $\phi:$ \# Faulty bits, l=12, w=4, q=3329, n=256} \\ \hline
\end{tabular}
\caption{Random and Burst Fault in r for Kyber }

\label{tab:kyber:r}
\end{table}

Similarly, the same study is conducted for the secret vector computation in CKKS’s $NTT$, as shown in Tables \ref{tab:ckks:c}, \ref{tab:ckks:q}, and \ref{tab:ckks:r}, by varying the number of faulty loops $\lambda$, the number of faulty bits $\phi$, and the fault locations at $c$, $\kappa$, and $r$, respectively. Results show that up to 512 faulty $NTT$ iterations in CKKS, with any number of injected faults, cause the values of $\chi_0$, $\chi_1$, and $\chi_2$ to deviate from their corresponding values obtained when no faults occur in any $NTT$ loop. Therefore, if the fault persists for at least 512 $NTT$ iterations in CKKS $NTT$, our model can detect any number of faulty bits with 100\% efficiency. When the number of faulty $NTT$ iterations is $512$ or fewer, some values of $\chi_0$, $\chi_1$, and $\chi_2$ may coincide with their corresponding values obtained when no faults occur in any $NTT$ loop. Therefore, beyond $512$ faulty iterations during the secret vector computation process in CKKS $NTT$, the fault detection efficiency decreases from 100\%. The detailed values of $\chi_0$, $\chi_1$, $\chi_2$ and $\chi_3$ for the secret vector computation process in CKKS $NTT$ are shown in Table \ref{tab:ckks:c}, Table \ref{tab:ckks:q} and Table \ref{tab:ckks:r}.

\begin{table}[!htb]
\centering
\begin{tabular}{|c|c|c|c|c|c|}
\hline
$\lambda$ & $\phi$ & $\chi_0$ (\%) & $\chi_1$ (\%) & $\chi_2$ (\%) & $\chi_3$ (\%) \\ \hline

$\times$&0&\textbf{80.06--83.81} &\textbf{0.07--0.11}&\textbf{16.11--19.85}&\textbf{0.00, 0.01}\\\hline
\multicolumn{6}{|c|}{\textbf{Random Fault in c}} \\ \hline
24576&1&\cellcolor{gray}{59.07--59.45}&\cellcolor{gray}{0.64--0.74}&\cellcolor{gray}{39.79--40.18}&\cellcolor{gray}{0.04, 0.07}\\
1024&1&\cellcolor{gray}{74.61--76.59}&\cellcolor{gray}{0.12--0.16}&\cellcolor{gray}{23.28--25.25}&0.00, 0.01\\
512&1&\cellcolor{gray}{76.42--78.66}&0.09--0.14&\cellcolor{gray}{21.22--23.46}&0.00, 0.01\\
128&1&78.60--81.63 &0.08--0.12&18.29--21.29&0.00, 0.01\\\hline
24576&2&\cellcolor{gray}{54.04--54.45}&\cellcolor{gray}{1.36--1.53}&\cellcolor{gray}{44.00--44.42}&\cellcolor{gray}{0.11, 0.15}\\
1024&2&\cellcolor{gray}{74.39--76.34}&\cellcolor{gray}{0.14--0.19}&\cellcolor{gray}{23.49--25.41}&0.00, 0.01\\
512&2&\cellcolor{gray}{76.20--78.72}&\cellcolor[gray]{0.8}{0.11--0.16}&\cellcolor{gray}{21.14--23.65}&0.00, 0.01\\
128&2&78.56--81.72 &0.09--0.13&18.18--21.32&0.00, 0.01\\\hline
24576&3&\cellcolor{gray}{51.60--51.97}&\cellcolor{gray}{2.20--2.40}&\cellcolor{gray}{45.51--45.89}&\cellcolor{gray}{0.19, 0.24}\\
1024&3&\cellcolor{gray}{74.24--76.11} &\cellcolor{gray}{0.18--0.24}&\cellcolor{gray}{23.67--25.54}&\cellcolor[gray]{0.8}{0.01, 0.02}\\
512&3&\cellcolor{gray}{76.17--78.73}&\cellcolor{gray}{0.13--0.18}&\cellcolor{gray}{21.11--23.66}&0.00, 0.01\\
128&3&78.31--81.60 &0.09--0.13&18.30--21.57&0.00, 0.01\\\hline

\multicolumn{6}{|c|}{\textbf{Burst Fault in c}} \\ \hline
24576&1&\cellcolor{gray}{59.08--59.46}&\cellcolor{gray}{0.64--0.74}&\cellcolor{gray}{39.79--40.18}&\cellcolor{gray}{0.04, 0.07}\\
1024&1&\cellcolor{gray}{74.73--76.50} &\cellcolor{gray}{0.11--0.16}&\cellcolor{gray}{23.38--25.12}&0.00, 0.01\\
512&1&\cellcolor{gray}{76.45--78.75}&\cellcolor[gray]{0.8}{0.10--0.14}&\cellcolor{gray}{21.13--23.41}&0.00, 0.01\\
128&1&78.75--81.80 &0.08--0.12&18.11--21.14&0.00, 0.01\\\hline
24576&2&\cellcolor{gray}{54.05--54.40}&\cellcolor{gray}{1.37--1.51}&\cellcolor{gray}{44.00--44.39}&\cellcolor{gray}{0.11, 0.15}\\
1024&2&\cellcolor{gray}{74.32--76.23}&\cellcolor{gray}{0.14--0.19}&\cellcolor{gray}{23.59--25.49}&0.00, 0.01\\
512&2&\cellcolor{gray}{76.36--78.80}&\cellcolor[gray]{0.8}{0.11--0.16}&\cellcolor{gray}{21.06--23.49}&0.00, 0.01\\
128&2&78.66--81.82 &0.08--0.13&18.08--21.23&0.00, 0.01\\\hline
24576&3&\cellcolor{gray}{51.60--51.97}&\cellcolor{gray}{2.22--2.39}&\cellcolor{gray}{45.52--45.87}&\cellcolor{gray}{0.19, 0.24}\\
1024&3&\cellcolor{gray}{74.20--76.06}&\cellcolor{gray}{0.18--0.24}&\cellcolor{gray}{23.73--25.57}&0.01, 0.02\\
512&3&\cellcolor{gray}{76.28--78.84}&\cellcolor{gray}{0.13--0.18}&\cellcolor{gray}{21.01--23.57}&0.00, 0.01\\
128&3&78.65--81.69 &0.09--0.13&18.20--21.23&0.00, 0.01\\\hline
\multicolumn{6}{|c|}{$\lambda:$ \# Faulty Loops, $\phi:$ \# Faulty bits, l=32, w=8, q=1811939329, n=4096} \\ \hline
\end{tabular}
\caption{Random and Burst Fault in c for CKKS}
\label{tab:ckks:c}
\end{table}

\begin{table}[!htb]
\centering
\begin{tabular}{|c|c|c|c|c|c|}
\hline
$\lambda$ & $\phi$ & $\chi_0$ (\%) & $\chi_1$ (\%) & $\chi_2$ (\%) & $\chi_3$ (\%) \\ \hline
$\times$&0&\textbf{80.06--83.81} &\textbf{0.07--0.11}&\textbf{16.11--19.85}&\textbf{0.00, 0.01}\\\hline
\multicolumn{6}{|c|}{\textbf{Random Fault in $\kappa$}} \\ \hline
24576&1&\cellcolor{gray}{59.08--59.44}&\cellcolor{gray}{0.63--0.75}&\cellcolor{gray}{39.80--40.19}&\cellcolor{gray}{0.04, 0.07}\\
1024&1&\cellcolor{gray}{74.66--76.58} &\cellcolor{gray}{0.12--0.16}&\cellcolor{gray}{23.28--25.20}&0.00, 0.01\\
512&1&\cellcolor{gray}{76.49--78.83} &\cellcolor{gray}{0.10--0.14}&\cellcolor{gray}{21.06--23.39}&0.00, 0.01\\
128&1&78.79--81.80 &0.08--0.12&18.11--21.10&0.00, 0.01\\\hline
24576&2&\cellcolor{gray}{54.02--54.42} &\cellcolor{gray}{1.37--1.53}&\cellcolor{gray}{44.01--44.42}&\cellcolor{gray}{0.11, 0.15}\\
1024&2&\cellcolor{gray}{74.47--76.20} &\cellcolor{gray}{0.14--0.20}&\cellcolor{gray}{23.62--25.36}&0.00, 0.01\\
512&2&\cellcolor{gray}{76.33--78.63} &0.11--0.16&\cellcolor{gray}{21.23--23.52}&0.00, 0.01\\
128&2&78.59--81.77 &0.09--0.13&18.13--21.30&0.00, 0.01\\\hline
24576&3&\cellcolor{gray}{51.59--51.97}&\cellcolor{gray}{2.22--2.41}&\cellcolor{gray}{45.49--45.89}&\cellcolor{gray}{0.19, 0.24}\\
1024&3&\cellcolor{gray}{74.23--76.26} &\cellcolor{gray}{0.18--0.24}&\cellcolor{gray}{23.52--25.55}&0.01, 0.02\\
512&3&\cellcolor{gray}{76.10--78.70}&\cellcolor{gray}{0.13--0.18}&\cellcolor{gray}{21.14--23.73}&0.00, 0.01\\
128&3&78.76--81.73 &0.09--0.13&18.17--21.12&0.00, 0.01\\\hline
\multicolumn{6}{|c|}{\textbf{Burst Fault in $\kappa$}} \\ \hline
24576&1&\cellcolor{gray}{59.09--59.45}&\cellcolor{gray}{0.64--0.74}&\cellcolor{gray}{39.80--40.17}&0.04, 0.08\\
1024&1&\cellcolor{gray}{74.68--76.76}&0.11--0.16&\cellcolor{gray}{23.10--25.17}&0.00, 0.01\\
512&1&\cellcolor{gray}{76.45--78.85}&0.10--0.14&\cellcolor{gray}{21.03--23.41}&0.00, 0.01\\
128&1&78.63--81.79 &0.08--0.12&18.12--21.26&0.00, 0.01\\\hline
24576&2&\cellcolor{gray}{54.03--54.43}&\cellcolor{gray}{1.35--1.52}&\cellcolor{gray}{43.98--44.41}&\cellcolor{gray}{0.11, 0.15}\\
1024&2&\cellcolor{gray}{74.47--76.18} &\cellcolor{gray}{0.15--0.19}&\cellcolor{gray}{23.66--25.35}&0.00, 0.01\\
512&2&\cellcolor{gray}{76.35--78.75}&0.11--0.16&\cellcolor{gray}{21.11--23.50}&0.00, 0.01\\
128&2&78.67--81.73 &0.09--0.13&18.15--21.21&0.00, 0.01\\\hline
24576&3&\cellcolor{gray}{51.60--51.96}&\cellcolor{gray}{2.21--2.39}&\cellcolor{gray}{45.52--45.90}&\cellcolor{gray}{0.19, 0.24}\\
1024&3&\cellcolor{gray}{74.24--76.20}&\cellcolor{gray}{0.18--0.23}&\cellcolor{gray}{23.59--25.53}&0.00, 0.02\\
512&3&\cellcolor{gray}{76.33--78.59}&\cellcolor{gray}{0.13--0.18}&\cellcolor{gray}{21.25--23.52}&0.00, 0.01\\
128&3&78.63--82.03 &0.09--0.13&17.88--21.24&0.00, 0.01\\\hline

\end{tabular}
\caption{Random and Burst Fault in $\kappa$ for CKKS}
\label{tab:ckks:q}
\end{table}

\begin{table}[!htb]
\centering
\begin{tabular}{|c|c|c|c|c|c|}
\hline
$\lambda$ & $\phi$ & $\chi_0$ (\%) & $\chi_1$ (\%) & $\chi_2$ (\%) & $\chi_3$ (\%) \\ \hline
$\times$&0&\textbf{80.06--83.81} &\textbf{0.07--0.11}&\textbf{16.11--19.85}&\textbf{0.00, 0.01}\\\hline
\multicolumn{6}{|c|}{\textbf{Random Fault in r}} \\ \hline
24576&1&\cellcolor{gray}{59.08--59.45}&\cellcolor{gray}{0.63--0.74}&\cellcolor{gray}{39.80--40.18}&\cellcolor{gray}{0.04, 0.07}\\
1024&1&\cellcolor{gray}{74.72--76.46}&\cellcolor{gray}{0.12--0.16}&\cellcolor{gray}{23.41--25.14}&0.00, 0.01\\
512&1&\cellcolor{gray}{76.44--78.82} &0.10--0.14&\cellcolor{gray}{21.07--23.44}&0.00, 0.01\\
128&1&78.69--81.81 &0.08--0.12&18.11--21.21&0.00, 0.01\\\hline
24576&2&\cellcolor{gray}{54.03--54.42}&\cellcolor{gray}{1.37--1.50}&\cellcolor{gray}{44.01--44.41}&\cellcolor{gray}{0.11, 0.16}\\
1024&2&\cellcolor{gray}{74.47--76.32}&\cellcolor{gray}{0.14--0.19}&\cellcolor{gray}{23.51--25.36}&0.00, 0.01\\
512&2&\cellcolor{gray}{76.33--78.61}&\cellcolor[gray]{0.8}{0.11--0.16}&\cellcolor{gray}{21.27--23.53}&0.00, 0.01\\
128&2&78.66--81.82 &0.09--0.13&18.08--21.21&0.00, 0.01\\\hline
24576&3&\cellcolor{gray}{51.59--51.98}&\cellcolor{gray}{2.22--2.39}&\cellcolor{gray}{45.49--45.89}&\cellcolor{gray}{0.19, 0.24}\\
1024&3&\cellcolor{gray}{74.23--76.04}&\cellcolor{gray}{0.18--0.23}&\cellcolor{gray}{23.74--25.55}&0.01, 0.02\\
512&3&\cellcolor{gray}{76.28--78.59} &\cellcolor{gray}{0.13--0.18}&\cellcolor{gray}{21.26--23.55}&0.00, 0.01\\
128&3&78.69--81.74 &0.09--0.13&18.16--21.19&0.00, 0.01\\\hline
\multicolumn{6}{|c|}{\textbf{Burst Fault in r}} \\ \hline
24576&1&\cellcolor{gray}{59.09--59.47}&\cellcolor{gray}{0.64--0.74}&\cellcolor{gray}{39.79--40.16}&\cellcolor{gray}{0.04, 0.08}\\
1024&1&\cellcolor{gray}{74.56--76.49} &\cellcolor{gray}{0.12--0.16}&\cellcolor{gray}{23.37--25.30}&0.00, 0.01\\
512&1&\cellcolor{gray}{76.35--78.75}&0.10--0.14&\cellcolor{gray}{21.12--23.52}&0.00, 0.01\\
128&1&78.78--82.03 &0.08--0.12&17.87--21.10&0.00, 0.01\\\hline
24576&2&\cellcolor{gray}{54.03--54.42}&\cellcolor{gray}{1.37--1.51}&\cellcolor{gray}{44.00--44.39}&\cellcolor{gray}{0.11, 0.15}\\
1024&2&\cellcolor{gray}{74.47--76.28}&\cellcolor{gray}{0.14--0.20}&\cellcolor{gray}{23.56--25.35}&0.00, 0.01\\
512&2&\cellcolor{gray}{76.31--78.72} &\cellcolor[gray]{0.8}{0.11--0.16}&\cellcolor{gray}{21.15--23.55}&0.00, 0.01\\
128&2&78.53--81.93 &0.08--0.13&17.97--21.35&0.00, 0.01\\\hline
24576&3&\cellcolor{gray}{51.60--51.98}&\cellcolor{gray}{2.21--2.39}&\cellcolor{gray}{45.54--45.87}&\cellcolor{gray}{0.19, 0.24}\\
1024&3&\cellcolor{gray}{74.22--76.21}&\cellcolor{gray}{0.18--0.24}&\cellcolor{gray}{23.58--25.55}&0.01, 0.02\\
512&3&\cellcolor{gray}{76.23--78.66}&\cellcolor{gray}{0.13--0.18}&\cellcolor{gray}{21.18--23.59}&0.00, 0.01\\
128&3&78.61--81.66 &0.09--0.13&18.24--21.28&0.00, 0.01\\\hline
\multicolumn{6}{|c|}{$\lambda:$ \# Faulty Loops, $\phi:$ \# Faulty bits, l=32, w=8, q=1811939329, n=4096} \\ \hline
\end{tabular}
\caption{Random and Burst Fault in r for CKKS}
\label{tab:ckks:r}
\end{table}

\section{Result \& Discussion}
\label{sec:result}
The proposed protected word-wise $BMM$ is implemented in VHDL on an Artix-7 FPGA using Vivado 23.2. The proposed wordwise $BMM$ is implemented with kyber and CKKS parameter set. The results of the proposed methods are compared in terms of overhead and error-detection efficiency in the following two sections, namely Sec. \ref{sec:overhead} and Sec. \ref{sec:error:efficiency}, respectively.
\subsection{Overhead}
\label{sec:overhead}
The overhead of the proposed fault-detection scheme for the $BMM$ is evaluated using three approaches: (i) overhead analysis across different $BMM$ word sizes for the Kyber and CKKS parameter sets, and (ii) overhead comparison of the proposed $BMM$ against existing fault-detection solutions reported in the literature and (iii) comparison of the overhead of the proposed protected $BMM$ placed inside the $NTT$ with existing protected $NTT$ schemes
\subsubsection{Impact of Wordsize and parameter set on Hardware Overhead}
As shown in Table \ref{tab:bmm_srm_results}, the Kyber $BMM$ with word sizes 4 and 6 incurs resource overheads of 18.9\% and 20.5\%, respectively. However, the full Kyber $BMM$ implementation without word-wise partitioning incurs only an 8.3\% resource overhead. Similarly, for CKKS, the proposed wordwise $BMM$ with word sizes 8 and 16 incurs resource overheads of 10.9\% and 7.9\%, respectively. However, the full CKKS $BMM$ implementation without word-wise partitioning incurs only a 6.13\% resource overhead. The energy overhead of the proposed method is approximately 1\% for all Kyber and CKKS variants.

\begin{table*}[!htb]
\centering
\begin{tabular}{|p{2.1cm}|p{2.3cm}|p{1.3cm}|c|c|c|p{1.4cm}|c|p{0.7cm}|p{0.7cm}|}
\hline
\textbf{Application} & \textbf{Config (l, w, q)} & \textbf{SEC} & \textbf{Slice} & \textbf{LUT/FF/DSP} & \textbf{Power (mW)} & \textbf{Energy (nJ)} & \textbf{CC} & \textbf{CP (ns)} & \textbf{CLK (MHz)} \\ \hline

BMM & Kyber, 12, 4, 3329 & 238  & 38 & 100 / 47 / 2 & 106 &  10.6& 10 & 9.46 & 100 \\ \hline
BMM+SRM & Kyber, 12, 4, 3329 &283 (\textbf{18.9\%$\uparrow$})  & 83 & 163 / 156 / 2 & 107  & 10.7 (\textbf{0.9\%$\uparrow$}) & 10 & 9.33 & 100 \\ \hline

BMM & Kyber, 12, 6, 3329 & 243 & 43 & 113 / 47 / 2 & 108 & 5.4 & 5 & 9.32 & 100 \\ \hline
BMM+SRM & Kyber, 12, 6, 3329 & 293 (\textbf{20.5\%$\uparrow$})  & 93 & 176 / 146 / 2 & 109 & 5.45 (\textbf{0.27\%$\uparrow$})& 5 & 9.42 & 100 \\ \hline

BMM & Kyber, 12, 12, 3329 & 513 & 13 & 31 / 26 / 5 & 106 & 2.54 & 2 & 11.67 & 83 \\ \hline
BMM+SRM & Kyber, 12, 12, 3329 & 556 (\textbf{8.3\%$\uparrow$}) & 56 & 94 / 112 / 5 & 107 & 2.56 (\textbf{0.7\%$\uparrow$}) & 2 & 11.53 & 83 \\ \hline

BMM & CKKS, 32, 8, 1811939329 & 374 & 74 & 234 / 84 / 3 & 130 & 22.1 & 17 & 8.78 & 100 \\ \hline
BMM+SRM & CKKS, 32, 8, 1811939329 &  415 (\textbf{10.9\%$\uparrow$})& 115 & 295 / 183 / 3 & 131 & 22.27 (\textbf{0.7\%$\uparrow$}) & 17 & 8.87 & 100 \\ \hline

BMM & CKKS, 32, 16, 1811939329 & 476 & 76 & 184 / 44 / 4 & 131 & 6.55 & 5 & 8.71 & 100 \\ \hline
BMM+SRM & CKKS, 32, 16, 1811939329 & 514 (\textbf{7.9\%$\uparrow$}) & 114 & 245 / 143 / 4 & 131.5 & 6.75 (\textbf{3\%$\uparrow$})  & 5 & 9.04 & 100 \\ \hline

BMM & CKKS, 32, 32, 1811939329 & 652 & 52 & 163 / 66 / 6 & 131 & 2.62 & 2 & 8.36 & 100 \\ \hline
BMM+SRM & CKKS, 32, 32, 1811939329 & 692 (\textbf{6.13\%$\uparrow$}) & 92 & 222 / 165 / 6 & 132 & 2.64 (\textbf{0.7\%$\uparrow$})& 2 & 9.12 & 100 \\ \hline
Aghapour~et~al. \cite{aghapour_barrett}~REMO+BMM & FHE, 1024, 32, - & 5146 (\textbf{17.1\%$\uparrow$}) & - &13720 / 10522 / 4 & 43 & 17.91 (\textbf{21.87\%$\uparrow$}) & 7365 & 23.2 & 40 \\ \hline
Baidya~et~al.\cite{baidya} RESWO+BMM & PQC, 12, 4, 3329 & 456 (\textbf{85.3\%$\uparrow$}) & - & 162/ 70 / 4 & - & 10.8  & 10 & 9.63 & 100 \\ \hline
Baidya~et~al.\cite{baidya} RESO+BMM & PQC, 12, 4, 3329 & 463 (\textbf{88.2\%$\uparrow$}) & - & 147/ 75 / 4 & - & 10.9  & 10 & 9.64 & 100 \\ \hline
Baidya~et~al.\cite{baidya} RENO+BMM & PQC, 12, 4, 3329 & 462 (\textbf{87.8\%$\uparrow$}) & - & 174/ 72 / 4 & - & 10.8  & 10 & 9.8 & 100 \\ \hline
Aghapour et al. \cite{aghapour_barrett} BMR & FHE, 2048, 32, - & 6556 (\textbf{19.4\%$\uparrow$}) & - &18801 / 11646 / 4 & & 3.04 (\textbf{31.73\%$\uparrow$}) & 2264 & 17.9 & 50 \\ \hline
\end{tabular}
\caption{Hardware Overhead Comparison of BMM and BMM+SRM}
\label{tab:bmm_srm_results}
\end{table*}

 \begin{table*}[!htbp]
 		\centering
\begin{tabular}{|m{1.2cm}|>{\centering\arraybackslash}m{2.6cm}|
>{\centering\arraybackslash}m{2.4cm}|
>{\centering\arraybackslash}m{1.2cm}|
>{\centering\arraybackslash}m{1.2cm}|
>{\centering\arraybackslash}m{1.2cm}|
>{\centering\arraybackslash}m{0.7cm}|
>{\centering\arraybackslash}m{1.2cm}|}
	\hline
	\textbf{Work}& \textbf{Type of Fault } & \textbf{Platform}& \multicolumn{3}{c|}{\textbf{Overhead(\%) }}&  \textbf{Frq.}& \textbf{(\%) Error} \\ \cline{4-6}
	
	& \textbf{Detection \& Target HW} &  &\textbf{Area} & \textbf{Delay} & \textbf{Energy} &  (MHz.) &\textbf{Coverage} \\ \hline  
   Sarker et al.\cite{sarker}  &RENO v1/ v2/ v3 for Falcon and NTRU $NTT$ & Spartan7 FPGA& 20.2\%$\uparrow$*/ 15.3\%$\uparrow$*/ 21.5\%$\uparrow$*  & 8.46\%$\uparrow$/ 15.88\%$\uparrow$)/ 13.71\%$\uparrow$   & 15.6\%$\uparrow$/ 7.6\%$\uparrow$/ 11.2\%$\uparrow$   &50 &99.51/99.67 /99.41 \\ \hline
 Sarker et al. \cite{sarker}  &RENO v1/v2/v3 for Falcon and NTRU $NTT$& Zynq FPGA& 24.24\%$\uparrow$*/ 14\%$\uparrow$*/ 17.8\%$\uparrow$*  & 9.32\%$\uparrow$/ 19.66\%$\uparrow$/ 21.78\%$\uparrow$  & 20.47\%$\uparrow$/ 13.27\%$\uparrow$/ 17.26\%$\uparrow$  &  50 & 99.51/99.67 /99.41 \\ \hline
  Ahmadi et al. \cite{ahmadi} & Correlated Check sum for Kyber $NTT$ &Artix-7 FPGA& 25.1\%$\uparrow$  &  41.3\% $\uparrow$&51.5\%$\uparrow$ & 140 & \textbf{$\sim 53$ -- $\sim 99.99$} \\ \hline
    Ravi et al. \cite{ravi_ntt} & Checking Entropy &ARM Cortex M4& NR  &  NR&NR &NR &NR \\ \hline
   Bauer et al. \cite{sven} &Polynomial Evaluation and Interpolation for Dilithium $NTT$&ARM Cortex M4& NR  & 72 &NR &NR &NR \\ \hline
    Jati et al. \cite{jati} & Randomized Memory Address for Kyber $NTT$ &Artix 7 FPGA& NR  &  NR&NR& NR  &NR \\ \hline
    Baidya et al. \cite{baidya} & RESWO for Kyber $NTT$ (wordwise) &Artix 7 FPGA& 24.29*  &  3.8&1.45 &100 &99.97 \\ \hline
   Baidya et al. \cite{baidya} & RESO for Kyber $NTT$ (wordwise) &Artix 7 FPGA& 34.34  &  5.8&2.97 &100 &99.97 \\\hline
    Baidya et al. \cite{baidya} & RENO for Kyber $NTT$ (wordwise) &Artix 7 FPGA& 24.71*  &  0.75&2.14 &100 &99.97 \\ \hline

    \textbf{Our} & SRM for Kyber $NTT$ (wordwise) &Artix 7 FPGA& \textbf{5.2}*  &  \textbf{1.2}&\textbf{<1} &100 &$33$ -- $100$ \\ \hline
    \textbf{Our} & SRM for CKKS $NTT$ (wordwise) &Artix 7 FPGA& \textbf{1.2}*  &  \textbf{1.3} &\textbf{<1} &100 &$36$ -- $100$ \\ \hline

\end{tabular}
\begin{tablenotes}
\item  $*$ If slice usage is not available we calculate SEC=0.25 $\times$ LUTs + 0.125 $\times$ FFs + 100 $\times$ DSPs + 200 $\times$ BRAMs \cite{sec}.~~ NR = Not Reported
\end{tablenotes}
\vspace{2pt}
 		\caption{Overhead Comparison with $NTT$ literature}
 	\label{tab:lit}
 \end{table*}

\subsubsection{Comparison with Existing Fault-Detection Schemes for BMM}
To the best of our knowledge, only two existing works, \cite{baidya} and \cite{aghapour_barrett}, have proposed fault-detection mechanisms for BMM based on recomputation. Compared with these approaches, the proposed SRM-based fault-detection scheme achieves significantly lower hardware and energy overhead while maintaining effective fault-detection capability. Recomputation based approaches RESWO, RESO, and RENO proposed by Baidya et al. \cite{baidya} incur approximately 85\%–88\% slice overhead, the proposed SRM requires only 18.9\% additional slices for the equivalent Kyber configuration (l=12,w=4,q=3329). Similarly Barrett Modular Redundancy (BMR) scheme of Aghapour et al. has area overheads of 17.1\%–19.4\% and energy overheads of 21.87\%–31.73\%, the proposed $SRM$ achieves comparable or lower area overhead while reducing the energy overhead to below 3\%. This low overhead of our proposed scheme is achieved through its statistical monitoring methodology, unlike existing approaches that rely on arithmetic recomputation or redundant hardware units. The statistical observation on the intrinsic reduction-path execution of the word-wise $BMM$ by our proposes $SRM$ scheme achieves minimal additional logic, leading to substantially lower area and energy consumption while preserving the original performance characteristics of the $BMM$ architecture.

\subsubsection{Overhead of BMM Embedded in $NTT$}
The proposed protected $BMM$ with $SRM$ is embedded in the Kyber and CKKS $NTT$s, resulting in approximately 5.2\% and 1.2\% area overhead, 1.2\% and 1.3\% delay overhead, and <1\% and <1\% power overhead, respectively. We identified six additional studies \cite{sarker,ahmadi,ravi,sven,jati,baidya} in which the authors proposed $NTT$ protection schemes targeting different subcomponents of the $NTT$. Sarker et al. protected modular multiplication, modular addition, and modular subtraction of $NTT$ using a recomputation-based fault-detection technique, RENO on Spartan 7 and Zynq FPGA. Ahmadi et al. \cite{ahmadi} implemented a correlated checksum-based fault-detection scheme for $NTT$ polynomials on an Artix-7 FPGA. Similar increments and decrements in checksum-based fault detection may not be able to detect fault occurrences. Baidya et al. \cite{baidya} implemented three recomputation-based fault-detection techniques, namely RESO, RENO, and RESWO, for the protection of the $BMM$ modular multiplication unit employed within the $NTT$. Jati et al. \cite{jati} implemented a randomized memory addressing scheme for the Kyber $NTT$. This randomization is feasible because the order in which twiddle factors are read from memory within a stage does not affect the correctness of the $NTT$ computation. The resulting randomized memory access pattern mitigates soft analytical side-channel attacks (SASCA). Bauer et al. \cite{sven} employed a polynomial evaluation and interpolation technique for fault detection at the $NTT$ output, resulting in a substantial timing overhead of 72\% on the ARM Cortex-M4 platform. Ravi et al. \cite{ravi_ntt} demonstrated a zeroizing attack on the $NTT$ running on the ARM Cortex-M4 platform, which can reduce the entropy of the secret polynomial used in lattice-based PQC. A detailed discussion is available in Sec.~\ref{sec:threat}. As a countermeasure, the authors proposed measuring the entropy of the $NTT$ output. As shown in Table \ref{tab:lit}, the proposed $SRM$ method incurs the lowest area, delay, and energy overhead among all existing protected $NTT$ schemes.
\subsection{Error Detection Efficiency}
\label{sec:error:efficiency}
The error detection efficiency is evaluated using two approaches: simulation-based fault injection and emulated fault injection.
\subsubsection{Error Simulation}
The error injection procedure was simulated using Python on Ubuntu 24.04 running on an Intel i5 processor, where each $NTT$ operation was executed 100,000 times for random and burst faults. For Kyber, a complete $NTT$ computation requires 1,024 $NTT$ iterations, resulting in 1,024 $BMM$ calls. Similarly, for CKKS, a complete $NTT$ computation requires 24,576 $NTT$ iterations, resulting in 24,576 $BMM$ calls. Since each $NTT$ iteration performs exactly one $BMM$ operation, the total number of $BMM$ calls is equal to the total number of $NTT$ iterations. For both Kyber and CKKS KeyGen operations, we varied the number of faulty $NTT$ iterations ($\lambda$) and the number of faulty bits ($\phi$), and investigated whether the metrics $\chi_0$, $\chi_1$, and $\chi_2$ deviated from their corresponding values under fault-free conditions. As reported in Table~\ref{tab:kyber:error} and Table~\ref{tab:ckks:error}, for Kyber and CKKS key generation, respectively, the proposed scheme achieves a 100\% fault detection rate for both random and burst fault injections under permanent fault conditions, irrespective of the number of faulty bits ($\phi$). Here, a permanent fault refers to a fault that persists in any of the intermediate registers $c$, $\kappa$, or $r$ throughout all $NTT$ iterations. For transient faults, the number of affected $NTT$ iterations is reduced from the total 1,024 iterations in Kyber $NTT$ and 24,576 iterations in CKKS $NTT$. If a fault persists for at least 128 out of the 1,024 Kyber $NTT$ iterations, the proposed fault model still achieves a 100\% fault detection rate. Similarly, for CKKS $NTT$, if the fault persists for at least 512 out of the 24,576 $NTT$ iterations, the proposed model is still able to detect the fault with 100\% accuracy. When the number of $NTT$ iterations affected by a transient fault falls below 64 for Kyber and below 128 for CKKS, the fault detection rate of the proposed scheme begins to decrease. 
\subsubsection{Error Emulation}
In the fault emulation process, a fault injector hardware is built which directly connected with the three registers $c$, $\kappa$ and $r$ where we consider fault locations. This fault injector may be introduced in the design in the form of hardware Trojan. Under faulty conditions, the fault injector can modify any targeted bit of $c$, $\kappa$, or $r$ in either random-fault or burst-fault mode. The simulation results are reported in Tables~\ref{tab:kyber:c}, Tables~\ref{tab:kyber:q}, Tables~\ref{tab:kyber:r}, Tables~\ref{tab:ckks:c}, Tables~\ref{tab:ckks:q} and Tables~\ref{tab:ckks:r} as simulation enables the efficient evaluation of a significantly larger number of fault samples than hardware emulation.

\begin{table}[!htb]
\centering
\begin{tabular}{|p{1.0cm}|c|c|p{1.2cm}|p{1.9cm}|}
\hline
 \textbf{Fault Type} & $\lambda$ & $\phi$ & \textbf{Fault Location} & \textbf{Efficiency} (\%) \\ \hline

    & 128--1024 & 1--11 & c & 100 \\ \cline{2-5}
    & 64    & 1-11  & c & $\sim$96.12--100     \\ \cline{2-5}
    & 32    & 1  & c & $\sim$30.62     \\ \cline{2-5}
    & 32    & 2--11  & c & $\sim$60.77--100     \\ \cline{2-5}

Random & 128--1024 & 1--11 & $\kappa$ & 100 \\\cline{2-5}
  & 64    & 1--11  & $\kappa$ & $\sim$ 36    \\ 
 \cline{2-5}
  & 128--1024 & 1--11 & r & 100 \\ \cline{2-5}
        & 64    & 1  & r & $\sim$46.12   \\ \cline{2-5}
       & 64    & 2--11  & r & $\sim$90.39--100    
       \\ \hline
     & 128--1024 & 1--11 & c & 100 \\ \cline{2-5}
        & 64    & 1--11  & c & $\sim$96.49--100     \\ \cline{2-5}
        & 32    & 1  & c & $\sim$29.40     \\ \cline{2-5}
       & 32    & 2--11  & c & $\sim$29.87--79.21     \\ \cline{2-5}
  Burst & 128--1024 & 1--11 & $\kappa$ & 100 \\ \cline{2-5}
       & 64    & 1  & $\kappa$ & $\sim$33.18    \\ \cline{2-5}
      & 64    & 2--11  & $\kappa$ & $\sim$33--66      \\ \cline{2-5}
      & 128--1024 & 1--11 & r & 100 \\ \cline{2-5}
      & 64    & 1  & r & $\sim$46.68   \\ \cline{2-5}
      & 64    & 2--11  & r & $\sim$67.67--100     \\ \hline
\end{tabular}
\caption{BMM Fault Coverage for Kyber (l=12, w=4, q=3329)}
\label{tab:kyber:error}
\end{table}

\begin{table}[!htb]
\centering
\begin{tabular}{|p{0.8cm}|c|c|p{1.2cm}|p{1.9cm}|}
\hline
 \textbf{Fault Type} & $\lambda$ & $\phi$ & \textbf{Fault Location} & \textbf{Efficiency} (\%) \\ \hline

    & 512--24576 & 1--11 & c & 100 \\ \cline{2-5}
    & 128    & 1-11  & c & $\sim$50.89--    \\ \cline{2-5}

Random & 128--1024 & 1--11 & $\kappa$ & 100 \\\cline{2-5}
  & 64    & 1--11  & $\kappa$ & $\sim$ 36    \\ 
 \cline{2-5}
  & 128--1024 & 1--11 & r & 100 \\ \cline{2-5}
        & 64    & 1  & r & $\sim$46.12   \\ \cline{2-5}
       & 64    & 2--11  & r & $\sim$90.39--100    
       \\ \hline
     &  512--24576 & 1--11 & c & 100 \\ \cline{2-5}
        & 128    & 1--11  & c & $\sim$50.78--100     \\ \cline{2-5} 
  Burst & 512--24576 & 1--11 & $\kappa$ & 100 \\ \cline{2-5}
       & 128    & 1-11  & $\kappa$ & $\sim$50.89--100    \\ \cline{2-5}
      & 512--24576 & 1--11 & r & 100 \\ \cline{2-5}
      & 128    & 1--11  & r & $\sim$52.23--100  \\
       \hline
\end{tabular}
\caption{BMM Fault Coverage for CKKS (l=32, w=8, q=1811939329)}
\label{tab:ckks:error}
\end{table}

\section{Conclusion}
$BMM$ is the most critical hardware component of $NTT$ used in lattice based cryptogarphy. Since the $BMM$ consumes a significant portion of the hardware resources and energy budget of lattice-based PQC and FHE accelerators, it becomes an attractive target for attackers. This paper proposes and implements a novel statistical fault detection method based on the reduction path execution patterns of a word-wise BMM. To the best of our knowledge, the proposed scheme achieves comparable fault-detection capability with lower hardware-resource, delay and energy overhead than existing BMM specific fault-detection techniques reported in the literature. The proposed schemes can detect 100\% of permanent faults and are also effective in detecting transient faults. Furthermore, when the proposed $SRM$ and $BMM$ are incorporated into the $NTT$ architecture, the resulting design exhibits the lowest area, energy, and timing overhead among all existing architectures considered in this work, to the best of our knowledge.\\
\textbf{To promote open research and reproducibility, implementation and associated test
cases of wordwise $BMM$ are uploaded on a public GitHub repository \footnote{https://github.com/rourabpaul1986/wordwise\_pipelined\_barrett}, enabling independent verification and thorough validation of the proposed methodology.}
\label{sec:con}

\bibliographystyle{unsrt}  
\bibliography{IEEEexample}

\end{document}